\theoremstyle{plain}
\newtheorem{thm}{Theorem}
\newtheorem{prop}[thm]{Proposition}
\newtheorem{cor}[thm]{Corollary}
\newtheorem{lem}[thm]{Lemma}
\newtheorem{nota}{Notation}
\theoremstyle{definition}
\newtheorem{defn}[thm]{Definition}
\theoremstyle{remark}
\newtheorem{rem}{Remark}
\renewcommand{\leq}{\leqslant}
\renewcommand{\le}{\leqslant}
\renewcommand{\geq}{\geqslant}
\renewcommand{\ge}{\geqslant}
\newcommand{\eqdef}{\stackrel{\text{def}}{=}}
\newcommand{\F}{\ensuremath{\mathbb{F}}}
\newcommand{\card}[1]{\left | #1 \right |}
\newcommand{\prob}{\ensuremath{\textsf{Prob}}}
\newcommand{\code}[1]{\ensuremath{\mathscr{#1}}}
\newcommand{\Csec}{\code{C}_{\text{sec}}}
\newcommand{\Cpub}{\code{C}_{\text{pub}}}
\newcommand{\Clp}{\code{C}_{\ensuremath{\boldsymbol{\lambda}^{\bot}}}}
\newcommand{\Clpp}{\code{C}_{\ensuremath{\boldsymbol{\lambda}^{\bot}}}^{\bot}}
\newcommand{\CC}{\code{C}}
\newcommand{\DC}{\code{D}}
\newcommand{\sqc}[1]{#1^2}   %square code
\newcommand{\scp}[2]{#1\cdot #2} %scalar product
\newcommand{\cwp}{\star} % component-wise product
\newcommand{\word}[1]{\ensuremath{\boldsymbol{#1}}}
\newcommand{\av}{\word{a}}
\newcommand{\bv}{\word{b}}
\newcommand{\alphav}{\word{\alpha}}
\newcommand{\betav}{\word{\beta}}
\newcommand{\lambdav}{\word{\lambda}}
\newcommand{\cv}{\word{c}}
\newcommand{\dv}{\word{d}}
\newcommand{\ev}{\word{e}}
\newcommand{\gv}{\word{g}}
\newcommand{\hv}{\word{h}}
\newcommand{\mv}{\word{m}}
\newcommand{\pv}{\word{p}}
\newcommand{\uv}{\word{u}}
\newcommand{\vv}{\word{v}}
\newcommand{\xv}{\word{x}}
\newcommand{\yv}{\word{y}}
\newcommand{\zv}{\word{z}}
\newcommand{\mat}[1]{\ensuremath{\boldsymbol{#1}}}
\newcommand{\Dm}{\mat{D}}
\newcommand{\Gm}{\mat{G}}
\newcommand{\Pm}{\mat{P}}
\newcommand{\Pim}{\mat{\Pi}}
\newcommand{\Qm}{\mat{Q}}
\newcommand{\Rm}{\mat{R}}
\newcommand{\Sm}{\mat{S}}
\newcommand{\Gms}{\mat{G_{sec}}}
\newcommand{\Gmp}{\mat{G_{pub}}}
\newcommand{\pI}[1]{\Phi_{I}\left( #1 \right)}
\newcommand{\pIJ}[1]{\Phi_{I\setminus J} \left( #1 \right)}
\newcommand{\ff}[1]{\F_{#1}}
\newcommand{\fq}{\F_{q}}
\newcommand{\GRS}[3]{\text{\bf GRS}_{#1}(#2,#3)}
\newcommand{\map}[4]{\left\{
\begin{array}{ccc}
 #1 & \rightarrow & #2 \\
 #3 & \mapsto     & #4
\end{array}
\right.}
\title{\bf Distinguisher-Based   Attacks on Public-Key Cryptosystems
  Using Reed-Solomon Codes} 
\author{A.~Couvreur\thanks{GRACE Project, INRIA Saclay \&  LIX, CNRS UMR 7161 - \'Ecole Polytechnique, 91120 Palaiseau Cedex, France.\\
{\tt alain.couvreur@lix.polytechnique.fr}},
P.~Gaborit\thanks{XLIM, CNRS UMR 7252 - Universit\'e de Limoges, 123 avenue Albert
Thomas, 87060 Limoges Cedex, France.\\
{\tt philippe.gaborit@unilim.fr}},
V. Gauthier-Uma\~na\thanks{Faculty of Natural Sciences and Mathematics, Department of Mathematics, Universidad del Rosario, Bogot\'a, Colombia.
{\tt gauthier.valerie@ursario.edu.co}},
A.~Otmani\thanks{Normandie Univ, France;
UR, LITIS, F-76821 Mont-Saint-Aignan, France.
{\tt ayoub.otmani@univ-rouen.fr}},
J.-P.~Tillich\thanks{SECRET Project - INRIA Rocquencourt,   
78153 Le Chesnay Cedex, France. 
{\tt jean-pierre.tillich@inria.fr}}}
\begin{document}

\maketitle

\begin{abstract}
Because of their interesting algebraic properties, several authors promote the use of generalized Reed-Solomon codes in cryptography. 
Niederreiter was the first to suggest an instantiation of his cryptosystem with them but Sidelnikov and Shestakov showed 
that this choice is insecure. 
Wieschebrink proposed a variant of the McEliece cryptosystem 
which consists in  concatenating a few random columns to  a generator matrix of a secretly chosen generalized Reed-Solomon code.
More recently, new schemes appeared which are the homomorphic encryption scheme
proposed by Bogdanov and Lee, and  a variation of the McEliece cryptosystem proposed  by Baldi et \textit{al.}
which hides the generalized Reed-Solomon code by means of matrices of very low rank. 

\smallskip

In this work, we show how to mount key-recovery attacks against these public-key encryption schemes. We use the concept of distinguisher 
which aims at detecting a behavior different from the one that one would 
expect from a random code. All the distinguishers we have
built  are based on the notion of component-wise product of codes.
It results in a  powerful tool that is able to recover the secret structure of codes when they are derived from generalized Reed-Solomon codes.
Lastly, we give an alternative to Sidelnikov and Shestakov attack
by building a filtration which enables to completely recover the support and the non-zero scalars defining the secret 
generalized Reed-Solomon code.
\end{abstract}

\bigskip

\noindent \textbf{Keywords.} Code-based cryptography; generalized Reed-Solomon codes; key-recovery; distinguisher; homomorphic encryption.

\medskip

\noindent \textbf{Mathematics Subject Classication (2010):} 11T71, 94B40

\section*{Introduction}

The first cryptographic scheme using generalized Reed-Solomon codes was proposed in 1986 by 
Niederreiter  \cite{Nie86} but it was shown to be insecure in 
\cite{SidelShesta92}. The attack recovers the underlying Reed-Solomon code
allowing the decoding of any encrypted data. 
However during the past years there were several attempts
to repair this scheme.
In the present article, we focus on three modified
McEliece schemes using generalized Reed Solomon codes.
The first one
was proposed by Wieschebrink \cite{Wie06} and consists in 
choosing  a generator matrix of a generalized  Reed-Solomon code and adding to it a few random columns. It was advocated 
 that this modification avoids the Sidelnikov-Shestakov attack \cite{SidelShesta92}. 
 More recently, some of the nice algebraic properties of the Reed-Solomon codes were also used to devise 
the first public-key homomorphic encryption scheme \cite{BL12} based on coding theory.
The 
third one is another variant of the McEliece cryptosystem \cite{McEliece78}
proposed  in \cite{BBCRS11a} which uses 
this time a generator matrix of a generalized Reed-Solomon but hides its structure differently than in the McEliece cryptosystem: 
instead of multiplying by a permutation matrix, the generator matrix 
is multiplied by a matrix whose inverse is 
 of the form $ \Pim + \Rm$ 
where $\Pim$ is
a sparse matrix with row density $m\geq 1$ and $\Rm$ is a matrix of rank $z \ge 1$. 
The key point of this modification is that the public code obtained 
with this method is not anymore a generalized Reed-Solomon code and
this seems to thwart  the Sidelnikov and Shestakov attack completely. 
In the present article, we propose polynomial time attacks of these
three schemes. Notice that for Baldi \textit{et al.}'s scheme \cite{BBCRS11a},
our attack only considers the case when the matrix $\Pim$ is a permutation matrix \textit{i.e.} the case $m=1$, and
$\Rm$ is of rank $z =1$. 
We focus on these specific cases because all the parameters proposed 
 in \cite{BBCRS11a} were of this form. A good reason for these choices ($m = 1$ and $z = 1$) stems from the fact 
that the resulting schemes have the smallest public key sizes and the smallest deciphering complexity
among this class of encryption schemes.

\medskip

Contrarily to the Niederreiter's proposal \cite{Nie86} based on generalized Reed-Solomon codes, the original  McEliece cryptosystem
\cite{McEliece78}  which uses Goppa codes, has withstood many key-recovery attacks and after more than thirty years now, it still belongs 
to the very few unbroken public-key cryptosystems. 
No significant  breakthrough has been observed with respect to the problem of 
recovering the private key. For instance, the weak keys found in  \cite{Gib91,LS01} can be easily avoided. This fact has led to claim that the generator matrix of a binary Goppa 
code does not disclose any visible structure that an attacker could exploit. 
This is strengthened by the fact that Goppa codes
share many characteristics with random codes.
However, in \cite{FGOPT11a,FGOPT13}, an algorithm  that manages to 
distinguish between a random code and a Goppa code has been introduced.
This work, without undermining the security of \cite{McEliece78},
prompts to wonder whether it would be possible to devise an attack based 
on such a distinguisher. 
It turns out \cite{MP12a} that the distinguisher in \cite{FGOPT11a,FGOPT13} has an equivalent but simpler description
in terms of the component-wise product of codes. This notion was first put forward in coding theory to unify many 
different algebraic decoding algorithms \cite{Pel92,Kot92a}. Recently, it was used in \cite{CMP11,CMP11:DCC}
to study the security of cryptosystems based on Algebraic-Geometric codes.
Component-wise powers of codes are also studied in the context of secure multi-party computation (see for example \cite{Cramer09,Cramer11}).
This distinguisher is even more powerful in the case of Reed-Solomon codes 
than for Goppa codes.
Indeed, whereas for Goppa codes it is only successful for rates close to $1$, it
can distinguish Reed-Solomon codes of any rate from random codes. 

\medskip

In the specific case of \cite{BL12}, the underlying public code is a modified Reed-Solomon code obtained
from  the insertion of a zero submatrix in the Vandermonde generating matrix  defining it and in this case, the aforementioned distinguisher leads to an attack that is different from the one found independently by Brakerski in \cite{brakerski13}. More exactly,
we present a key-recovery attack on the Bogdanov-Lee homomorphic scheme based on the version of our distinguisher presented in \cite{MP12a}. Our attack runs in polynomial time and is efficient: it only amounts to calculate the ranks of certain matrices derived from the public key.
In \cite{BL12} the columns that define 
 the zero submatrix are kept secret and form a set $L$. We give here a distinguisher that detects if one or several columns belong
to $L$ or not. It is constructed by considering the code  generated by component-wise products of codewords of the public code
(the so-called ``square code"). This operation is applied to punctured versions of this square code obtained by picking a subset 
 $I$ of the whole set indexing the columns. It turns out that the dimension of
 the punctured square code  is directly related to  the cardinality of
 the intersection of $I$ with $L$. 
This gives a way to recover the full set $L$ 
allowing the decryption of  any ciphertext. 

\medskip

We also propose another cryptanalysis against 
the variant of the McEliece cryptosystem \cite{McEliece78} proposed  in \cite{BBCRS11a}.
As explained above, the public code obtained 
with this method is not anymore a generalized Reed-Solomon code (GRS for short).
On the other hand, it contains a very large secret GRS code. 
We present an attack that is based on a distinguisher which is able to identify elements of this secret code. This distinguisher is again 
derived from considerations about the dimension of component-wise products of codes. Once this secret code is obtained, it is then possible to completely recover the initial GRS code by using the 
square-code construction as in \cite{Wie10}. We are then able to decode any ciphertext.

\medskip

Finally, we also cryptanalyze  the first variant of the McEliece cryptosystem based on GRS codes \cite{Wie06}. 
We show here how a refinement of our distinguisher 
permits to recover the random columns added to the generator matrix of the GRS code. Once these column positions are recovered, the Sidelnikov and Shestakov attack can be used on the non-random
part of the generator matrix to completely break the scheme.
It should also be pointed out that the properties of Reed-Solomon codes with respect to the (component-wise) product of codes have 
already been used to cryptanalyze a McEliece-like scheme  \cite{BL05} based on subcodes of Reed-Solomon codes
\cite{Wie10}. The use of this product is nevertheless different in
\cite{Wie10} from the way we use it here.
Note also that our attack is not an 
adaptation of  the Sidelnikov and Shestakov approach \cite{SidelShesta92}. 
Our approach is completely new: it illustrates how a distinguisher
that detects an abnormal behavior can be used to recover a private
key. 

\medskip

To demonstrate further the power of our approach, we give an alternative to Sidelnikov and Shestakov's 
way \cite{SidelShesta92} to fully recover the structure of a generalized
Reed-Solomon codes. Our new attack uses the code product to build a decreasing
chain of subcodes resulting to a code of very small dimension which shares the same support
as the original secret generalized Reed-Solomon code and for which the
structure is very simple to recover. This achievement is obtained by
repeatedly solving linear systems. 
The resulting complexity is $O(k^2 n^3+k^3n^2)$ operations in
the underlying field. This attack is more complex than the 
original Sidelnikov and Shestakov but, because it does not rely on the computation of
minimum codewords as in \cite{SidelShesta92}, 
it might be applied to other families of codes such as Reed-Muller codes.  
This is in particular the case for wild Goppa codes \cite{BLP10:wild} as shown in 
the paper \cite{COT:EC14} where this technique was further developed  and applied to wild Goppa codes 
defined over quadratic extensions. It gave for the first time a polynomial time attack 
on a McEliece cryptosystem based on non-binary Goppa codes. This recent result highlights the potential power 
of this method in cryptography.

\paragraph{\bf Organization of the paper.} In Section~\ref{sec:basics} we recall relevant notions from coding theory. 
In Section~\ref{sec:Wiesch}, we show that adjunction of random columns to a generalized Reed-Solomon codes as advocated in \cite{Wie06} does not improve the security of McEliece-like cryptosystems based on Reed-Solomon codes. 
In Section~\ref{Sec:BL} we describe 
the cryptanalysis of the homomorphic cryptosystem introduced by Bogdanov and Lee in \cite{BL12}. 
Section~\ref{sec:schemeit} describes the cryptosystem proposed in \cite{BBCRS11a} and explains the reasons 
why this scheme is insecure. In Section~\ref{sec:GRS_attack} we give another way to attack a scheme based on generalized Reed-Solomon codes, and lastly we conclude the paper.

\section{Reed-Solomon Codes and the Square Code Construction}
  \label{sec:basics}

We recall in this section a few relevant results and definitions from coding theory and bring in the
fundamental notion which is used in both attacks, namely the square code construction.
Generalized Reed-Solomon codes (GRS in short)  form a special case of codes with a very powerful low complexity decoding algorithm.
It will be convenient to use the definition of these codes as
{\em evaluation codes}.

\begin{defn}[Generalized Reed-Solomon code] \label{defGRS}
Let $k$ and $n$ be integers such that $1 \leqslant k < n \leqslant q$ where $q$ is a power of
a prime number.
The generalized Reed-Solomon code $\GRS{k}{\xv}{\yv}$ of dimension $k$ is associated to a pair
$(\xv,\yv) \in \fq^n \times \fq^n$ where $\xv$ is an $n$-tuple of distinct elements of
$\fq$  and $\yv$
an $n$--tuple of arbitrary nonzero elements in $\fq$. 
The code
$\GRS{k}{\xv}{\yv}$ is 
defined as: 
$$
\GRS{k}{\xv}{\yv} \eqdef
\Big\{(y_1p(x_1),\dots{},y_np(x_n)) : p \in \fq[X], \deg p < k\Big\}.
$$
\end{defn}

\begin{rem}
 Reed-Solomon codes correspond to the case
where $y_i=1$ for all $i$.
\end{rem}

The first work that suggested to use GRS codes in a public-key cryptosystem scheme 
was \cite{Nie86}. But Sidelnikov and Shestakov  
discovered in \cite{SidelShesta92} 
that this scheme is insecure. They namely showed that for any GRS code it is 
possible  to recover in polynomial time a  couple $(\xv,\yv)$ which defines it.
This is all that is needed to decode efficiently such codes and is therefore enough to break the Niederreiter cryptosystem suggested in 
\cite{Nie86} or any McEliece type cryptosystem \cite{McEliece78} that uses GRS codes instead of binary 
Goppa codes.

\begin{defn}[Componentwise products]
  Given two vectors $\av=(a_1, \dots, a_n)$ and $\bv=(b_1, \dots, b_n)
  \in \fq^n$, we denote by $\av \cwp \bv $ the componentwise product
 $$
 \av \cwp \bv \eqdef (a_1b_1,\dots{},a_n b_n)
 $$
\end{defn}

The star product should be distinguished from a more useful operation
in coding theory, namely the canonical inner product:
\begin{nota}
  Given $\av, \bv\in \fq^n$, the inner product $\scp{\av}{\bv}$ is defined as
  $$
  \scp{\av}{\bv}\eqdef \sum_{i=1}^n a_i b_i.
  $$
\end{nota}

\begin{defn}[Product of codes \& square code]
Let $\code{A}$ and $\code{B}$ be two codes of length $n$. The
\emph{star product code} denoted by $\code{A} \cwp \code{B}$ of $\code{A}$ and $\code{B}$ is \emph{the vector space
spanned by all products} $\av \cwp \bv$ where $\av$ and $\bv$ range over $\code{A}$ and $\code{B}$ respectively.
When $\code{B} = \code{A}$ then  $\code{A} \cwp \code{A}$ is called the \emph{square code} of $\code{A}$
and is rather denoted by $\sqc{\code{A}}$.
\end{defn}

It is clear that $\code{A} \cwp \code{B}$ is also generated by the $\av_i \cwp \bv_j$'s where the $\av_i$'s and the
$\bv_j$'s form a basis of $\code{A}$ and $\code{B}$ respectively.
Therefore, we have the following result.
\begin{prop}\label{prop:dimprod}
Let $\code{A}$ and $\code{B}$ be two codes of length $n$, then
\begin{enumerate}
\item\label{item:prod}  $\dim(\code{A} \cwp \code{B}) \le \dim(\code{A})
  \dim(\code{B})$;
\item\label{item:square} $\displaystyle \dim (\sqc{\code{A}}) \leq \binom{\dim(\code{A})+1}{2}$.
\end{enumerate}
\end{prop}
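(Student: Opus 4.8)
The plan is to lean entirely on the bilinearity and the commutativity of the star product, both of which are immediate from its coordinatewise definition $(\av\cwp\bv)_\ell = a_\ell b_\ell$, since multiplication in $\fq$ is bilinear and commutative in each coordinate.

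For item~\ref{item:prod}, I fix a basis $\av_1,\dots,\av_a$ of $\code{A}$ and a basis $\bv_1,\dots,\bv_b$ of $\code{B}$, where $a=\dim(\code{A})$ and $b=\dim(\code{B})$. Given arbitrary $\av=\sum_{i=1}^{a}\lambda_i\av_i\in\code{A}$ and $\bv=\sum_{j=1}^{b}\mu_j\bv_j\in\code{B}$, bilinearity yields $\av\cwp\bv=\sum_{i,j}\lambda_i\mu_j(\av_i\cwp\bv_j)$. Hence every generator $\av\cwp\bv$ of $\code{A}\cwp\code{B}$ lies in the span of the $ab$ vectors $\av_i\cwp\bv_j$, so $\dim(\code{A}\cwp\code{B})\le ab=\dim(\code{A})\dim(\code{B})$; this is precisely the observation recorded just before the statement.

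For item~\ref{item:square}, I apply the first part with $\code{B}=\code{A}$ and the same basis $\av_1,\dots,\av_a$ on both sides, so that $\sqc{\code{A}}$ is spanned by $\{\av_i\cwp\av_j:1\le i,j\le a\}$. Commutativity of $\cwp$ gives $\av_i\cwp\av_j=\av_j\cwp\av_i$, so this spanning family reduces to $\{\av_i\cwp\av_j:1\le i\le j\le a\}$, which has cardinality $\binom{a}{2}+a=\binom{a+1}{2}$. Therefore $\dim(\sqc{\code{A}})\le\binom{\dim(\code{A})+1}{2}$.

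There is no real obstacle here: the only point needing a moment's care is to confirm that $\cwp$ is genuinely $\fq$-bilinear and symmetric on $\fq^n$, which is clear coordinate by coordinate. It is worth remarking that neither inequality need be tight — the typical strict inequality for $\sqc{\code{A}}$ when $\code{A}$ is a structured code such as a GRS code, as opposed to the generic equality for a random $\code{A}$, is exactly the gap exploited by the distinguishers of this paper — but no tightness is asserted in the statement.
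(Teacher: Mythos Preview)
Your proof is correct and is exactly the argument the paper has in mind: the paper does not give a separate proof of this proposition but simply notes, in the sentence immediately preceding it, that $\code{A}\cwp\code{B}$ is generated by the $\av_i\cwp\bv_j$'s for bases $(\av_i)$ and $(\bv_j)$, and then states the bounds as an immediate consequence. You have just made the bilinearity and symmetry explicit.
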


\begin{prop}\label{prop:complexity}
  Let $\code{A} \subset \fq^n$ be a code of dimension $k$.
  The complexity of the computation of a basis of
  $\sqc{\code{A}}$ is $O(k^2n^2)$ operations in $\fq$.
\end{prop}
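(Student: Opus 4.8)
\emph{Proof plan.} The plan is to exhibit an explicit spanning family for $\sqc{\code{A}}$, arrange it as the rows of a matrix, and extract a basis by Gaussian elimination, accounting for the cost of each step.

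First, I would fix a basis $\av_1,\dots,\av_k$ of $\code{A}$; if $\code{A}$ is presented by a full-rank $k\times n$ generator matrix this is free, and otherwise a preliminary row reduction costing $O(k^2 n)$ operations produces one. By the remark following Proposition~\ref{prop:dimprod}, the $\binom{k+1}{2}$ vectors $\av_i \cwp \av_j$ with $1\le i\le j\le k$ span $\sqc{\code{A}}$. Each such vector is obtained with $n$ multiplications in $\fq$, so assembling the matrix $\mat{M}\in \fq^{\binom{k+1}{2}\times n}$ whose rows are the $\av_i \cwp \av_j$ costs $O(k^2 n)$ operations.

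Next, I would bring $\mat{M}$ to row echelon form by Gaussian elimination; the nonzero rows of the result then form a basis of the row space of $\mat{M}$, which is exactly $\sqc{\code{A}}$. The point requiring care is the cost of this elimination. A matrix with $r$ rows, $c$ columns and rank $\rho$ can be put in row echelon form in $O(rc\rho)$ field operations: each pivot triggers one rank-one update of an $O(r)\times O(c)$ block, there are exactly $\rho$ pivots, and columns carrying no pivot are simply skipped at a total additional cost of $O(rc)$. Here $r=\binom{k+1}{2}=O(k^2)$, $c=n$, and $\rho=\dim\sqc{\code{A}}\le n$ since $\sqc{\code{A}}$ has length $n$; hence this step costs $O(k^2\cdot n\cdot n)=O(k^2 n^2)$.

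Summing the two contributions yields $O(k^2 n)+O(k^2 n^2)=O(k^2 n^2)$, as claimed. The only real subtlety is the complexity bookkeeping in the last step: the crude estimate $O(rc\min(r,c))$ for Gaussian elimination would give the weaker bound $O(k^4 n)$ when $k^2>n$, and the improvement comes precisely from observing that the number of elimination steps equals the rank $\rho$, which for a code of length $n$ never exceeds $n$ (it is also at most $\binom{k+1}{2}$ by Proposition~\ref{prop:dimprod}). Everything else is routine.
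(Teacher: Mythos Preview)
Your proof is correct and follows the same two-step approach as the paper: compute the $\binom{k+1}{2}$ generators in $O(k^2n)$ and then run Gaussian elimination on the resulting $\binom{k+1}{2}\times n$ matrix in $O(k^2n^2)$. One small slip in your aside: when $k^2>n$ the crude bound $O(rc\min(r,c))$ already gives $O(k^2n^2)$ since $\min(r,c)=n$, so the rank refinement, while correct, is not actually needed here.
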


\begin{proof}
  The computation, consists first in the computation of ${k+1 \choose 2}$
  generators of $\sqc{\code{A}}$. This computation costs $O(k^2n)$ operations.
  Then, we have to apply a Gaussian elimination to
  a ${k+1 \choose 2}\times n$ matrix, which costs $O(k^2n^2)$ operations.
  This second step is dominant, which yields the result.
\end{proof}

The importance of the square code construction will become clear when we compare the dimensions 
of square codes obtained through a \emph{structured} code and random code and one major question is to know 
what one should expect. The following Proposition~\ref{prop:square} shows that when applied to GRS codes, 
the dimension of the square code is roughly  twice as large as the dimension of the underlying code.
This fact has been already observed in \cite{Wie10} in a cryptanalytic setting. A proof can also be found in \cite[Proposition 10]{MMP11a:DCC}.

\begin{prop}\label{prop:square} For $k\leq (n+1)/2$, we have
$
\GRS{k}{\xv}{\yv}^2 = \GRS{2k-1}{\xv}{\yv\cwp\yv}.
$
\end{prop}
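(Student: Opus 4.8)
The plan is to prove the statement by a double inclusion argument, working directly with the evaluation description of the codes. Recall that $\GRS{k}{\xv}{\yv}$ consists of the vectors $(y_1 p(x_1), \dots, y_n p(x_n))$ for $p \in \fq[X]$ with $\deg p < k$. First I would observe that a spanning set for the square code is given by the componentwise products of basis vectors: taking the natural basis $\bv_i \eqdef (y_1 x_1^i, \dots, y_n x_n^i)$ for $0 \le i \le k-1$, we have $\bv_i \cwp \bv_j = (y_1^2 x_1^{i+j}, \dots, y_n^2 x_n^{i+j})$, which is precisely the evaluation vector associated to the polynomial $X^{i+j}$ with multiplier vector $\yv \cwp \yv$. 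As $i$ and $j$ range over $\{0, \dots, k-1\}$, the exponent $i+j$ ranges over all of $\{0, 1, \dots, 2k-2\}$, so these products are exactly the natural spanning vectors of $\GRS{2k-1}{\xv}{\yv \cwp \yv}$. This already gives both inclusions at the level of generators, hence $\GRS{k}{\xv}{\yv}^2 = \GRS{2k-1}{\xv}{\yv \cwp \yv}$ as vector spaces.

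More carefully, for the inclusion $\GRS{k}{\xv}{\yv}^2 \subseteq \GRS{2k-1}{\xv}{\yv \cwp \yv}$, it suffices by bilinearity and the definition of the star product of codes to check it on products $\cv \cwp \dv$ with $\cv, \dv \in \GRS{k}{\xv}{\yv}$; writing $\cv = (y_i p(x_i))_i$ and $\dv = (y_i r(x_i))_i$ with $\deg p, \deg r < k$, we get $\cv \cwp \dv = (y_i^2 (pr)(x_i))_i$ with $\deg(pr) \le 2k-2 < 2k-1$, so $\cv \cwp \dv \in \GRS{2k-1}{\xv}{\yv \cwp \yv}$. For the reverse inclusion, I would use that $\GRS{2k-1}{\xv}{\yv \cwp \yv}$ is spanned by the vectors $(y_i^2 x_i^m)_i$ for $0 \le m \le 2k-2$, and each such $m$ can be written as $m = i + j$ with $0 \le i, j \le k-1$ (take $i = \min(m, k-1)$ and $j = m - i$, which lies in $[0, k-1]$ precisely because $m \le 2k-2$), so $(y_i^2 x_i^m)_i = \bv_i \cwp \bv_j \in \GRS{k}{\xv}{\yv}^2$.

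The one genuinely necessary hypothesis is $k \le (n+1)/2$, equivalently $2k - 1 \le n$, which is needed for $\GRS{2k-1}{\xv}{\yv \cwp \yv}$ to be a well-defined GRS code in the sense of Definition~\ref{defGRS} (the dimension parameter must be at most $n$, and strictly less than $n$ if one insists on the strict inequality there; in any case one needs $2k-1 \le n$ for the statement to even typecheck, and the polynomials of degree $< 2k-1$ must inject into $\fq^n$ via evaluation, which holds since the $x_i$ are distinct and $2k - 1 \le n$). I would make this point explicit. The argument itself has no real obstacle: the main thing to be careful about is the bookkeeping showing that every exponent in $\{0, \dots, 2k-2\}$ is hit as a sum $i + j$ with both summands in $\{0, \dots, k-1\}$, and that the evaluation map in degree $< 2k-1$ is injective so that the dimension count is consistent with Proposition~\ref{prop:dimprod}; both are routine. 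Since the result is already known (references are given to \cite{Wie10} and \cite[Proposition 10]{MMP11a:DCC}), I would keep the proof short and essentially limited to the generator-level computation above.
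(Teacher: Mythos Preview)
Your proof is correct and follows essentially the same approach as the paper: both directions are obtained from the evaluation description, showing that a star product of two codewords evaluates a polynomial of degree $\le 2k-2$ with multiplier $\yv\cwp\yv$, and conversely that every monomial $X^m$ with $0\le m\le 2k-2$ factors as $X^i X^j$ with $i,j\le k-1$. Your write-up is slightly more explicit about the reverse inclusion and about why the hypothesis $2k-1\le n$ is needed, but the argument is the same.
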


\begin{proof}
 This follows immediately from the definition of a GRS code as an evaluation code since
 the star product of two elements $\cv=(y_1 p(x_1),\dots,y_np(x_n))$ and $\cv'=(y_1 q(x_1),\dots,y_nq(x_n))$ of $\GRS{k}{\xv}{\yv}$ where
 $p$ and $q$ are two polynomials of degree at most $k-1$ is of the form 
 $$\cv \cwp \cv' 
 = \big(y_1^2 p(x_1)q(x_1),\dots,y_n^2 p(x_n)q(x_n)\big)
 =\left(y_1^2r(x_1),\dots,y_n^2r(x_n)\right)
 $$
 where $r$ is a polynomial of degree $\leq 2k-2$. Conversely, any element of the form $\left(y_1^2r(x_1),\dots,y_n^2r(x_n)\right)$
 where $r$ is a polynomial of degree less than or equal to $2k-2$  is a linear combination of star products of two elements of $\GRS{k}{\xv}{\yv}$.
\end{proof}

This proposition shows
that the square code is only of dimension $2k-1$ when $2k-1 \leq n$.
This property can also be used in the case $2k-1 >n$. To see this,
consider the dual of the Reed-Solomon code, which is itself a Reed-Solomon code 
\cite[Theorem 4, p.304]{MacSloBook}

\begin{prop}\label{pr:dual} 
$
 \GRS{k}{\xv}{\yv}^\perp = \GRS{n-k}{\xv}{\yv'} 
$
 where the length of $\GRS{k}{\xv}{\yv}$ is $n$ and $\yv'$ is a certain element of $\fq^n$ depending only on $\xv$ and $\yv$.
\end{prop}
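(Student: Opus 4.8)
The plan is to prove this by a dimension count combined with a single explicit containment, namely $\GRS{n-k}{\xv}{\yv'} \subseteq \GRS{k}{\xv}{\yv}^\perp$. A GRS code of parameter $m$ has dimension exactly $m$, and the dual of a $k$-dimensional subspace of $\fq^n$ has dimension $n-k$, so both sides have the same dimension $n-k$. Hence equality will follow as soon as I exhibit a vector $\yv' \in \fq^n$ with all coordinates nonzero, depending only on $(\xv,\yv)$, such that every codeword of $\GRS{n-k}{\xv}{\yv'}$ is orthogonal to every codeword of $\GRS{k}{\xv}{\yv}$.

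Writing a generic codeword of $\GRS{k}{\xv}{\yv}$ as $(y_i p(x_i))_i$ with $\deg p < k$ and a generic codeword of $\GRS{n-k}{\xv}{\yv'}$ as $(y_i' q(x_i))_i$ with $\deg q < n-k$, their inner product is $\sum_{i=1}^n y_i y_i'\, (pq)(x_i)$. Put $\lambda_i \eqdef y_i y_i'$. As $p$ and $q$ range over their monomial bases, the products $pq$ span the whole space of polynomials of degree $\leqslant n-2$ (every monomial $X^m$ with $0\leqslant m \leqslant n-2$ equals $X^i X^j$ for suitable $0 \leqslant i \leqslant k-1$, $0 \leqslant j \leqslant n-k-1$, using $1 \leqslant k \leqslant n-1$). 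By bilinearity the required orthogonality is therefore equivalent to the single linear system $\sum_{i=1}^n \lambda_i x_i^m = 0$ for $0 \leqslant m \leqslant n-2$; that is, $\lambdav = (\lambda_1,\dots,\lambda_n)$ must lie in the dual of the full Reed-Solomon code $\GRS{n-1}{\xv}{\mathbf{1}}$, a space which is one-dimensional because the $x_i$ are distinct.

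It remains to pin down a generator of this one-dimensional space all of whose coordinates are nonzero, so that one may legitimately set $y_i' \eqdef \lambda_i / y_i$. For this I would invoke the partial-fraction (Lagrange interpolation) identity: with $\pi(X) \eqdef \prod_{j=1}^n (X-x_j)$ and $\lambda_i \eqdef 1/\pi'(x_i) = \big(\prod_{j \neq i}(x_i-x_j)\big)^{-1}$, one has for every polynomial $f$ with $\deg f \leqslant n-2$ the expansion $f(X)/\pi(X) = \sum_{i=1}^n \lambda_i f(x_i)/(X-x_i)$; comparing the coefficient of $1/X$ on both sides — the left-hand side being $O(X^{-2})$ at infinity — gives $\sum_{i=1}^n \lambda_i f(x_i) = 0$, which is exactly the needed relation for $f = X^m$, $0 \leqslant m \leqslant n-2$. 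These $\lambda_i$ are manifestly nonzero, so $\yv' \eqdef \big(1/(y_1 \pi'(x_1)),\dots,1/(y_n \pi'(x_n))\big)$ is a valid defining vector of a GRS code and depends only on $\xv$ and $\yv$. Together with the dimension count, this completes the proof.

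The only genuinely delicate point — the "hard part", such as it is — is the last step: one must verify not merely that the orthogonal line exists but that it admits a representative with no zero coordinate, which is precisely what makes the dual a bona fide GRS code (with all multipliers nonzero) rather than a punctured or shortened one; the partial-fraction identity disposes of this cleanly while simultaneously furnishing the explicit $\yv'$.
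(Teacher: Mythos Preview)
Your argument is correct and is in fact the classical proof of this well-known fact. The paper does not supply its own proof of this proposition: it simply cites \cite[Theorem 4, p.~304]{MacSloBook}, so there is nothing to compare against beyond noting that your write-up is essentially the standard textbook argument (dimension count plus the Lagrange/partial-fraction identity $\sum_i f(x_i)/\pi'(x_i)=0$ for $\deg f\leqslant n-2$), including the explicit formula $y_i' = 1/\bigl(y_i\,\pi'(x_i)\bigr)$ and the observation that these multipliers are indeed nonzero.
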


This result is clearly different from what would be obtained if random linear
codes were taken. Indeed,
we expect that the square code when applied to a random linear code of dimension $k$ should be a code of dimension of
order $\min\left\{\binom{k+1}{2},n\right\}$. Actually it can be shown by the proof technique of 
\cite{FGOPT11a,FGOPT13}  the following result (see also \cite{MP12a}).

\begin{prop}[\cite{FGOPT11a,FGOPT13}] \label{prop:square_random}
Let $k$ and $n$ be non-negative integers such that $k = O(n^{1/2})$ and consider a random $(n-k) \times (n-k)$ 
matrix $\Rm$ where each entry is independently and uniformly drawn from $\fq$. 
Let $\code{R}$ be the linear code defined by the generator matrix $\left ( \Im_k \mid \Rm \right)$ where $\Im_k$ is the $k \times k$ identity matrix. 

\smallskip

For any $\varepsilon$ such that $0 < \varepsilon < 1$ and any $\alpha > 0$, we have as $k$ tends to $+\infty$:
$$
\prob\left(\dim\big(\sqc{\code{R}}\big) \leq \binom{k+1}{2}\Big(1 -\alpha k^{-\varepsilon} \Big) \right) = \circ (1)
$$

\end{prop}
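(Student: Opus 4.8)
The plan is to run a first-moment argument in the spirit of \cite{FGOPT11a,FGOPT13}, after first isolating exactly which of the $\binom{k+1}{2}$ natural generators of $\sqc{\code{R}}$ can fail to be independent. Write $\gv_1,\dots,\gv_k$ for the rows of the generator matrix $(\Im_k\mid\Rm)$, so that $\gv_i=(\ev_i\mid\word{r}_i)$ with $\ev_i$ the $i$-th canonical basis vector of $\fq^k$ and $\word{r}_i\in\fq^{n-k}$ the $i$-th row of $\Rm$. The code $\sqc{\code{R}}$ is spanned by the products $\gv_i\cwp\gv_j$, $1\le i\le j\le k$. Since $\ev_i\cwp\ev_j=\delta_{ij}\ev_i$, the $k$ ``diagonal'' generators $\gv_i\cwp\gv_i=(\ev_i\mid\word{r}_i\cwp\word{r}_i)$ are linearly independent, and they stay independent modulo the span of the ``off-diagonal'' generators $\gv_i\cwp\gv_j=(\mathbf{0}\mid\word{r}_i\cwp\word{r}_j)$, $i<j$, which is supported on the last $n-k$ coordinates (project onto the first $k$ coordinates). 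Hence $\dim(\sqc{\code{R}})=k+\mathrm{rank}(M)=\binom{k+1}{2}-Z$, where $M$ is the $\binom{k}{2}\times(n-k)$ matrix whose rows are the $\word{r}_i\cwp\word{r}_j$, $i<j$, and $Z\eqdef\binom{k}{2}-\mathrm{rank}(M)$ is the dimension of its left kernel. It remains to show that $Z$ is negligible compared to $\binom{k+1}{2}$ with probability $1-o(1)$.

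Next I would bound $\mathbb{E}[Z]$. The $n-k$ columns of $\Rm$ are independent and uniform over $\fq^k$, and a vector $\lambdav=(\lambda_{ij})_{i<j}$ lies in the left kernel of $M$ exactly when the polynomial $Q_{\lambdav}=\sum_{i<j}\lambda_{ij}X_iX_j$ vanishes at every one of these columns. For $\lambdav\neq\mathbf{0}$ this is a nonzero polynomial of total degree $2$, so by the DeMillo--Lipton--Schwartz--Zippel lemma (plus a short direct argument for $q=2$, where that bound is vacuous) it vanishes at a uniformly random point with probability at most $1-c_0$ for an absolute constant $c_0>0$; more precisely, if the symmetric --- in characteristic $2$, alternating --- matrix associated with $\lambdav$ has rank $r$, that probability is at most $q^{-1}+O(q^{-\lceil r/2\rceil})$. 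Grouping the nonzero $\lambdav$ by this rank (there are at most $q^{O(rk)}$ of rank $r$, and at most $q^{\binom{k}{2}}$ in total) and raising the per-point probabilities to the power $n-k$, one gets, in the range $k=O(n^{1/2})$ where $\binom{k+1}{2}\le n$ (so the conclusion is not already vacuous), the estimate $\mathbb{E}[q^{Z}]\le q^{O(k)}$; Jensen's inequality then gives $\mathbb{E}[Z]=O(k)$.

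Markov's inequality now finishes the proof: since $\binom{k+1}{2}\ge k^2/2$, the event $\{Z\ge\alpha k^{-\varepsilon}\binom{k+1}{2}\}$ is contained in $\{Z\ge\tfrac{\alpha}{2}k^{2-\varepsilon}\}$, which has probability at most $\mathbb{E}[Z]/(\tfrac{\alpha}{2}k^{2-\varepsilon})=O(k^{-(1-\varepsilon)})=o(1)$ because $\varepsilon<1$; and on the complementary event $\dim(\sqc{\code{R}})=\binom{k+1}{2}-Z>\binom{k+1}{2}(1-\alpha k^{-\varepsilon})$. (When $k$ stays comfortably below $\sqrt{2n}$ the first-moment bound actually gives $\mathbb{E}[q^{Z}]=1+o(1)$, hence $Z=0$ with probability $1-o(1)$, so the upper bound $\dim(\sqc{\code{R}})\le\binom{\dim(\code{R})+1}{2}$ is attained; the slack $1-\alpha k^{-\varepsilon}$ is what absorbs the borderline $k\approx\sqrt{2n}$, where the square code need not fill the whole ambient space.)

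The hard part is the stratified first-moment bound of the second step. A crude union bound that uses only $\prob(Q_{\lambdav}(\xv)=0)\le 1-c_0$ needs $\binom{k}{2}\ln q$ to beat $(n-k)\ln\frac{1}{1-c_0}$ and hence only covers $k$ up to a small multiple of $\sqrt{n}$; to reach $k$ of order $\sqrt{n}$ with the right constant one must exploit that a \emph{typical} $\lambdav$ produces a high-rank quadratic form, whose vanishing probability is only $q^{-1}(1+o(1))$, whereas the atypical $\lambdav$ --- those whose associated form has small rank, and therefore anomalously many zeros --- are far too sparse to spoil the sum. Organising this count carefully, and handling the characteristic-$2$ case uniformly throughout (there the zero-diagonal symmetric matrices are alternating, so no $X_i^2$ term ever appears and DeMillo--Lipton--Schwartz--Zippel is useless), is the technical core, carried out in \cite{FGOPT11a,FGOPT13}.
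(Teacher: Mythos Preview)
The paper does not supply its own proof of this proposition: it is quoted as a result obtainable ``by the proof technique of \cite{FGOPT11a,FGOPT13}'' (with a further pointer to \cite{MP12a}), and nothing beyond the statement appears in the text. So there is no in-paper argument to compare against; what can be assessed is whether your outline is a faithful reconstruction of the method those references use.

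Your outline is sound and is indeed the standard first-moment approach. The decomposition $\dim(\sqc{\code{R}})=k+\mathrm{rank}(M)$ via the diagonal/off-diagonal split is correct and cleanly isolates the random part; the identification of the left kernel of $M$ with quadratic forms $Q_{\lambdav}$ vanishing on the columns of $\Rm$ is exactly right; and the passage from $\mathbb{E}[q^{Z}]$ to $\mathbb{E}[Z]$ via Jensen, followed by Markov, is the natural way to conclude. You are also right that the crude Schwartz--Zippel bound only reaches $k$ up to a $q$-dependent multiple of $\sqrt{n}$, and that the stratification by the rank of the associated bilinear form is what pushes the estimate to the full range---this is precisely the refinement carried out in the cited papers, and you correctly flag it as the technical core rather than trying to reproduce it.

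Two small points worth tightening. First, the claimed bound $\mathbb{E}[Z]=O(k)$ cannot hold uniformly over \emph{all} $k=O(n^{1/2})$: once $\binom{k+1}{2}>n$ the defect $Z$ is deterministically at least $\binom{k}{2}-(n-k)$, which is of order $k^{2}$ when the implied constant exceeds $\sqrt{2}$, and the proposition itself is then false as written (the event has probability $1$, not $o(1)$). Your parenthetical about ``the borderline $k\approx\sqrt{2n}$'' hints at this, but the argument as stated only goes through under the extra hypothesis $\binom{k+1}{2}\leq n$; you should make that restriction explicit rather than leaving it implicit in the $O$-constant. Second, the paper's statement has an evident typo ($\Rm$ is written as $(n-k)\times(n-k)$ but must be $k\times(n-k)$ for the concatenation with $\Im_{k}$ to make sense); you silently correct this, which is fine, but it is worth noting.
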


Therefore $\GRS{k}{\xv}{\yv}$ can be distinguished from 
 a random linear code of the same dimension by computing the dimension of the associated square codes.
This phenomenon was already observed in \cite{FGOPT11a,FGOPT13} for $q$-ary alternant codes (in particular Goppa codes) at very high rates. 
Let us note that even when  $2k-1 > n$  it is still possible to distinguish GRS codes from random codes by focusing on
 $\left(\GRS{k}{\xv}{\yv}^\perp\right)^2$. We have in this case:
 $$
\left(\GRS{k}{\xv}{\yv}^\perp\right)^2
 =
\GRS{n-k}{\xv}{\yv'}^2
 = 
\GRS{2n-2k-1}{\xv}{\yv'\cwp\yv'}\sqc{\code{C}}
 $$
which is a code of dimension $2n-2k-1$.

\medskip

The star product of codes  has been used for the first time by Wieschebrink to 
cryptanalyze a McEliece-like scheme \cite{BL05} based on subcodes of Reed-Solomon codes
 \cite{Wie10}. The use of the star product is nevertheless different in \cite{Wie10} from the way we use it here. In Wieschebrink's paper,
 the star product is used to identify  for a certain subcode  $\code{C}$ of a GRS code $\GRS{k}{\xv}{\yv}$
 a possible pair $(\xv,\yv)$. This is achieved by computing $\sqc{\code{C}}$ which turns out to be $\GRS{k}{\xv}{\yv}^2 = \GRS{2k-1}{\xv}{\yv \cwp \yv}$. The Sidelnikov and Shestakov algorithm is then 
 used on  $\sqc{\code{C}}$ to recover a possible $(\xv,\yv\cwp\yv)$ pair to describe $\sqc{\code{C}}$ as a GRS 
 code, and hence, a pair $(\xv,\yv)$  is deduced for which $\code{C}
 \subset \GRS{k}{\xv}{\yv}$.

\section{Wieschebrink's Encryption Scheme}
   \label{sec:Wiesch}

In \cite{Wie06}  Wieschebrink suggests a variant of the McEliece cryptosystem based
on GRS codes whose purpose was to resist to the Sidelnikov--Shestakov
attack.
The idea of this proposal is to use the generator matrix of a
GRS code over $\fq$ 
in which a small number of randomly chosen columns  are inserted.
More precisely, let $\Gm$ be a generator matrix of a GRS
    code of length $n$ and dimension $k$ 
    defined over $\fq$. Let $C_1, \ldots , C_r$ be $r$ column vectors
    in $\fq^k$ drawn uniformly at
    random and let $\Gm'$ be the matrix obtained by concatenating
    $\Gm$ and the columns $C_1,\ldots , C_r$. 
    Choose   $\Sm$ to be a $k\times k$ random invertible matrix
and let $\Qm$ be a an $(n+r)\times (n+r)$ permutation matrix. 
The public key of the scheme is
$$\Gmp \eqdef
\Sm^{-1} \Gm ' \Qm ^{-1}.$$
This cryptosystem can be cryptanalyzed if a description of the GRS code
can be recovered from $\Gmp$.
We give here a way to break this scheme in polynomial time which relies on  two  ingredients.
The first one is given by
\begin{lem}
  \label{lem:dimC'^2}
     Let $\Gm'$ be a $k\times (n+r)$--matrix obtained by inserting $r$
     random columns in a generator matrix of an $[n, k]$ GRS code $\CC$.
     Let $\CC'$ be the corresponding code. Assume that $k<(n-r)/2$, then
     $$
       2k-1 ~ \leqslant ~ \dim \sqc{\CC'} ~ \leqslant ~ 2k-1+r.
     $$
\end{lem}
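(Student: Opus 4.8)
The plan is to split the $n+r$ coordinate positions into the $n$ positions coming from the GRS code and the $r$ positions carrying the inserted columns, and then to play the two coordinate projections attached to this splitting against one another. Since a coordinate permutation $\sigma$ satisfies $\big(\sigma(\code{A})\big)^2 = \sigma\big(\sqc{\code{A}}\big)$, permuting coordinates changes neither $\dim\CC'$ nor $\dim\sqc{\CC'}$, so I may assume that a generator matrix of $\CC'$ has the block form $\Gm' = \left(\Gm \mid \mat{B}\right)$, where $\Gm$ is a $k\times n$ generator matrix of $\CC = \GRS{k}{\xv}{\yv}$ and $\mat{B}$ is the $k\times r$ block of inserted columns. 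Because $\Gm$ already has rank $k$, so does $\Gm'$, and the projection $\pi\colon \fq^{n+r}\to\fq^n$ onto the first $n$ coordinates maps $\CC'$ \emph{onto} $\CC$; equivalently, every $\cv\in\CC'$ can be written $\cv = (\av,\bv)$ with $\av$ an arbitrary element of $\CC$ and $\bv\in\fq^r$. Finally, the hypothesis $k<(n-r)/2$ gives in particular $2k-1<n$, so Proposition~\ref{prop:square} applies and $\sqc{\CC} = \GRS{2k-1}{\xv}{\yv\cwp\yv}$ has dimension exactly $2k-1$.

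For the upper bound I would use that $\sqc{\CC'}$ is spanned by the star products $\cv\cwp\cv' = (\av\cwp\av',\ \bv\cwp\bv')$ with $\av,\av'\in\CC$ and $\bv,\bv'\in\fq^r$. The first $n$ coordinates of each such generator lie in $\sqc{\CC}$ and the last $r$ coordinates lie in $\fq^r$, hence $\sqc{\CC'}\subseteq\sqc{\CC}\times\fq^r$, a subspace of dimension $(2k-1)+r$. This yields $\dim\sqc{\CC'}\leq 2k-1+r$.

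For the lower bound I would push the same description of $\sqc{\CC'}$ through $\pi$: it sends the spanning family $\{\cv\cwp\cv'\}$ of $\sqc{\CC'}$ onto the family $\{\av\cwp\av' : \av,\av'\in\CC\}$, which spans $\sqc{\CC}$. Therefore $\pi\big(\sqc{\CC'}\big) = \sqc{\CC}$, and since a linear map cannot increase dimension, $\dim\sqc{\CC'}\geq\dim\sqc{\CC} = 2k-1$.

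There is no genuine obstacle here: once the block decomposition is in place, both inequalities are immediate, and the GRS structure together with the hypothesis on $k$ enter only through the already-established Proposition~\ref{prop:square}. The single point that deserves a line of justification is the claim that $\pi$ restricted to $\CC'$ surjects onto $\CC$ — that is, that the ``$\av$-part'' of a codeword of $\CC'$ genuinely runs over \emph{all} of $\CC$ and not merely over a subcode; this is exactly where it matters that the random columns are \emph{appended} to a full-rank generator matrix of $\CC$ rather than substituted for some of its columns.
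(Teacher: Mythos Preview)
Your proof is correct and follows essentially the same approach as the paper: the lower bound is exactly the paper's puncturing argument (your projection $\pi$ is the puncturing map), and for the upper bound the paper writes $\CC'\subset\DC+\DC'$ with $\DC,\DC'$ the zero-extensions of $\CC$ and the random block, which is just another way of saying $\sqc{\CC'}\subseteq\sqc{\CC}\times\fq^r$. The only cosmetic difference is that you phrase things via projections and direct products while the paper uses disjoint-support subcodes of $\fq^{n+r}$.
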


\begin{proof}
The first inequality comes from the fact that puncturing $\CC'^2$ at the $r$
  positions corresponding to the added random columns yields the code $\CC^2$
  which is the square of an $[n, k]$ GRS code and hence an $[n, 2k-1]$ GRS code.
To prove the upper bound, let 
$\DC$ be the code with generator matrix $\Dm$ obtained from $\Gm'$ by replacing the $C_{i}$'s columns  by all-zero columns and 
let 
$\DC'$ be the code with generator matrix $\Dm'$ obtained by replacing in $\Gm'$ all columns which are not the $C_{i}$'s by zero columns.
Since $\Gm' = \Dm+\Dm'$ we have
\begin{equation}\label{eq:sum}
\CC' \subset \DC + \DC'.
\end{equation}
Therefore 
\begin{eqnarray*}
\sqc{\CC'} & \subset & \sqc{\left(\DC + \DC'\right)}\\
& \subset & \sqc{\DC} + \sqc{\DC'} +  \DC\cwp \DC' \\
& \subset & \sqc{\DC} + \sqc{\DC'}
\end{eqnarray*}
where the last inclusion comes from the fact that $ \DC\cwp \DC'$ is the zero subspace
since $\DC$ and $\DC'$ have disjoint supports.
The right-hand side inequality follows immediately from this, since 
$\dim \sqc{\DC} = 2k-1$ and $ \dim \sqc{\DC'} \leq r$. 
\end{proof}

\begin{rem}
Actually the right-hand inequality of 
Lemma~\ref{lem:dimC'^2} is sharp and
we have observed experimentally that if $2k-r-1 < n$ then,
we almost always get
\begin{equation} \label{wie:rank}
\dim \sqc{\CC'} = 2k-1+r.
\end{equation}
For instance with values for $q$, $n$ and $r$ like those proposed in \cite{Wie06} and choosing $k = (n-r)/2 -1$ 
we observed with $1000$ random instances that Equation~\eqref{wie:rank} was always satisfied.
\end{rem}

This will be useful to detect the positions which correspond to the $C_{i}$'s.
We call such positions the {\em random positions} whereas the other positions 
are referred to as the {\em GRS positions}. We use in this case a
shortening trick which relies upon the following well-known lemma.

\begin{lem}[\cite{HPbook}]
 \label{fa:shortening}
Shortening a GRS code of parameters $[n,k]$ in $\ell \leqslant k$ positions
 gives
 a GRS code with parameters $[n-\ell,k-\ell]$.
\end{lem}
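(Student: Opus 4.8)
The plan is to work with the evaluation-code description of GRS codes from Definition~\ref{defGRS} and to show that shortening at a single position is again a GRS code, then iterate $\ell$ times. Recall that shortening a code $\CC$ of length $n$ at a position $i$ means taking the subcode of all codewords that vanish at coordinate $i$, and then deleting that coordinate; the result has length $n-1$ and, since exactly one linear constraint has been imposed on a code whose projection onto coordinate $i$ is surjective (this surjectivity holds for $\GRS{k}{\xv}{\yv}$ as long as $k\geq 1$, because $y_i\neq 0$ and the evaluation map $p\mapsto p(x_i)$ is onto), dimension $k-1$. So the parameter bookkeeping $[n,k]\mapsto[n-\ell,k-\ell]$ is automatic once we know the shortened code is still GRS; the only real content is the structural claim.

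The key step is this: after reordering so that $i=n$, a codeword $(y_1 p(x_1),\dots,y_n p(x_n))$ lies in the shortened subcode iff $y_n p(x_n)=0$, i.e.\ iff $p(x_n)=0$, i.e.\ iff $(X-x_n)\mid p(X)$. Writing $p(X)=(X-x_n)\tilde p(X)$ with $\deg\tilde p<k-1$, the truncated codeword is
$$
\big(y_1(x_1-x_n)\tilde p(x_1),\dots,y_{n-1}(x_{n-1}-x_n)\tilde p(x_{n-1})\big).
$$
Since the $x_j$ are distinct, the scalars $y_j' \eqdef y_j(x_j-x_n)$ are all nonzero, and as $\tilde p$ ranges over all polynomials of degree $<k-1$ we obtain exactly $\GRS{k-1}{\xv'}{\yv'}$ where $\xv'=(x_1,\dots,x_{n-1})$. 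This is the heart of the argument; everything else is routine.

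Then I would simply induct on $\ell$: shortening in $\ell$ positions is the same as shortening in one position and then shortening the resulting (still GRS, by the above) code in the remaining $\ell-1$ positions, which is legitimate as long as the dimension stays positive, i.e.\ as long as $\ell\leqslant k$ as hypothesized. I do not expect any genuine obstacle here; the one point requiring a word of care is the surjectivity of the evaluation at the shortened coordinate, which guarantees the dimension drops by exactly one at each step — but this is immediate from $y_i\neq0$ and $\deg p$ ranging up to $k-1\geq 0$, so the induction never stalls before the $\ell$-th step.
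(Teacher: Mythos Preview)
Your argument is correct: factoring out $(X-x_n)$ from the evaluated polynomial is exactly the right idea, and the induction on $\ell$ goes through without obstacle. The paper does not give its own proof of this lemma at all---it simply cites it as a standard fact from \cite{HPbook}---so there is nothing to compare your approach against; what you have written supplies a clean self-contained proof that the paper omits.
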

 
An attack easily follows from these facts. 
First of all, let us consider the case when
$2k-1+r \leqslant n$, then consider $\CC_i'$ which is
the punctured $\CC'$ code at position $i$.
Two cases can occur:
\begin{itemize}
\item {\em $i$ belongs to the random positions}, then we expect that the dimension of $\sqc{\CC_i'}$ is given by
$$
      \dim \sqc{\CC_i'} = 2k-2 + r.
    $$
    since $\CC_i'$ is a GRS code of dimension $k$ with $r-1$ random columns inserted in its generator matrix hence 
    $\dim \sqc{\CC_i'} = \dim \sqc{\CC'} + r-1 = 2k-2+r$ with a high probability.
\item {\em $i$ belongs to the GRS positions}, then $\CC_i'$
    is a GRS code of dimension $k$ with $r$ random columns inserted in its  generator matrix so that
\[
\dim \sqc{\CC_i'} = 2k-1 + r.
\] 
\end{itemize}
This gives a straightforward way to distinguish between the random positions and the GRS positions.

Consider now the case where $2k-1+r > n$. The point is to shorten $\CC'$ in $a$ positions,
then, thanks to Lemma~\ref{fa:shortening}, the same principle can be applied. 
Here  $a$ is chosen such that $a<k$ and
   $2(k-a)-1+r < n-a$ so that $a > 2k-1+r-n$.
Notice that these conditions on $a$ can be met as soon as
$
k> 2k +r -n$ that is to say $n > k+r
$, which always holds true.
Among these $a$ positions, $a_0$ of them are random positions and $a_1\eqdef a-a_0$ are
GRS positions. This yields an $a_0$--codimensional subcode of a GRS code of
parameters $[n-a_1, k-a_1]$ to which 
$r-a_{0}$ random positions have been added (or more precisely this yields 
a code with generator matrix given by the generator matrix of an
 $a_0$--codimensional subcode of a GRS code of
size $(k-a_{1})\times (n-a_{1})$ with $r-a_{0}$ random columns added to it).
Let $I_a$ be a set of $a$ positions
and denote by $\CC'_{I_a}$ the code $\CC'$ shortened in these positions.
Using the previous results, we get that with  high probability,
$$
   \dim \sqc{{\CC'_{I_a}}} = 2(k-a_1)-1 +r-a_0
$$
By this manner we get the value of $2a_1 +a_0$ and since $a=a_1+a_0$ is already
known we can deduce the values of $a_0$ and $a_1$.
To identify which positions of $\CC'_{I_a}$ are random positions and which ones
are GRS positions we just 
use the previous approach  by shortening $\CC'_{I_a}$ in an additional position
and checking whether or not the dimension decreases by one or two.
This approach has been implemented in Magma and leads to identify easily all
the random columns
for the parameters suggested in \cite{Wie06}. 
After identifying the random
columns in the public generator matrix, it just 
remains to puncture the public code at these positions and to apply the
Sidelnikov-Shestakov attack to completely 
break the scheme proposed in \cite{Wie06}. 
The complexity of guessing the random columns in the public generator matrix is hence given by the complexity of computing
the rank of $n+r$ matrices of size $\binom{k+1}{2} \times (n+r-1)$, that is to say 
$O\left( (n+r) k^2 (n+r)^2\right) = O\left( k^2 (n+r)^3\right)$ operations in the field $\fq$.

If, moreover, we assume that $2k-1+r > n$
as it is the case in \cite{Wie06} then in a worst-case scenario we would guess only one position among the random ones so that we have to iterate at most $r$ times the previous procedure. 
The complexity of the Sidelnikov-Shestakov attack \cite{SidelShesta92} is $O\left(k^3+k^2 n\right)$ and is negligible compared to the other calculations.
Thus, the complexity of the attack is $O\left( k^2 r(n+r)^3\right)$ operations in the field $\fq$.
In Table~\ref{tab:wie_time} we gathered the running times of the attack implemented in  \textsc{Magma}~(V2.19-9) \cite{MAGMA} and obtained with an 
Intel\textsuperscript{\textregistered} Xeon 2.90GHz.

\begin{table}
\begin{center}
\begin{tabular}{@{}*{5}{r}@{}}
\toprule 
$q$ & $n$ &$k$ & $r$ & Time (in seconds) \\
 \midrule
$128$ & $128$ & $79$ & $20$   & $9.22$ \\
$256$ & $256$ & $169$ & $39$ & $103.84$ \\
$512$ & $384$ & $245$ & $64$  & $517.78$ \\
$512$ & $512$ & $335$ & $83$  & $1517.98$\\
 \bottomrule
\end{tabular}
\end{center}
\caption{Average running time of the attack against Wieschebrink encryption scheme \cite{Wie06} with $N = 100$ trials.} 
\label{tab:wie_time}
\end{table}

\section{Bogdanov-Lee Homomorphic Cryptosystem} 
  \label{Sec:BL}

\subsection{Description of the scheme}\label{ss:scheme}

The cryptosystem proposed by Bogdanov and Lee in \cite{BL12} is a 
public-key homomorphic encryption scheme based on linear codes. 
It encrypts a plaintext $m$ from $\fq$  into a ciphertext $\cv$ that belongs to  
$\fq^n$ where $n$ is a given integer satisfying $n<q$.
The key generation requires two non-negative 
integers $\ell, k$ such that $3\ell < n$ and $\ell <k$ together with a
subset $L \subset \{1,\ldots,n\}$ of cardinality $3\ell$.
A set of $n$ distinct elements $x_1,\dots, x_n$ from $\fq^{\times}$ are 
generated at random.  They serve to construct a $k \times n$ matrix $\Gm$
whose $i$-th column $\Gm^T_i$ ($1\leqslant i \leqslant n$)
is defined by
$$
\Gm_i^T \eqdef \left\{
\begin{array}{ll}
(x_i, x_i^2, \dots, x_i^{\ell}, 0, \dots , 0 )                  &
\text{if } i \in L\\
& \\
(x_i, x_i^2, \dots, x_i^{\ell}, x_i^{\ell+1}, \dots , x_i^{k} ) & \text{if } i \notin L
\end{array} \right. ,$$
where the symbol $^T$ stands for the transpose.The cryptosystem is defined as follows: 
\begin{enumerate}
	\item \textbf{Secret key.} $(L, \Gm)$.
	\item \textbf{Public key.} $\Pm \eqdef \Sm \Gm $ where $\Sm$
          is a $k \times k$ random invertible  matrix over $\fq$.
	\item \textbf{Encryption.} The ciphertext $\cv \in \fq^n$
          corresponding to
          $m \in \fq$ is obtained as $\displaystyle \cv \eqdef \xv \Pm
          + m \textbf{1} +  \ev$ where $\textbf{1} \in \fq^n$ is the
          all-ones row vector, 
  $\xv$ is picked uniformly at random in $\fq^k$  and $\ev$
                          in $\fq^n$  by choosing its components according to a certain distribution $\tilde{\eta}$.
	\item \textbf{Decryption.} Solve the following linear system
          with unknowns $\yv \eqdef(y_1,\dots{},y_n) \in \fq^n$: 
	\begin{equation} \label{decryptionsystem}
          \Gm \yv^T   =  0 \text{, } \sum_{i \in L} y_i  =  1\text{ and }
          y_i=0   \textrm{ for all }  i \notin L.
                  \end{equation} 
          The plaintext is then $m = \displaystyle  \sum_{i = 1}^n y_i c_i$. 
\end{enumerate}

Let us explain here why  the  decryption algorithm  outputs  the correct plaintext when 
$\ell$ and $n$ are chosen such that  the entry $e_i$ at position $i$ of the error vector is zero when $i \in L$.
If this property on $\ev$ holds, notice that  the linear system~\eqref{decryptionsystem} has $3\ell$ unknowns and $\ell +1$ equations and since it is by construction of rank $\ell+1$, it always admits at least one solution. Then observe that
\begin{eqnarray*}
\sum_{i = 1}^n y_i c_i &=& (\xv \Pm + m \textbf{1} + \ev) \yv^T\\
& =&  (\xv \Pm + m \textbf{1} ) \yv^T \;\;\;\text{   (since $e_i=0$ if $i\in L$ and $y_i=0$ if $i \notin L$)}\\
& = &\xv \Sm \Gm \yv^T + m \sum_{i=1}^n y_i \\
& =  & m \;\;\;\text{   (since $\Gm \yv^T=0$ and  $\sum_{i=1}^n y_i=1$)}.
\end{eqnarray*}

\smallskip

The decryption algorithm will output  the correct plaintext when 
$\ell$ and $n$ are chosen such that  the entry $e_i$ at position $i$ of the error vector is zero when $i \in L$.
The distribution $\tilde{\eta}$ which is used to draw at random the coordinates of $\ev$ is chosen such that this
property holds with very large probability.  More precisely, the parameters $k$, $q$, $\ell$ and the noise distribution $\tilde{\eta}$ are chosen such that
 $q = \Omega\left(2^{n^\alpha}\right)$,
 $k=\Theta\left(n^{1-\alpha/8}\right)$, $\ell =
 \Theta\left(n^{\alpha/4}\right)$ and
 the noise distribution $\tilde{\eta}$ is the $q$-ary symmetric channel
 with noise rate\footnote{It means that $\prob(e_i=0)=1-\eta$ and $\prob(e_i=x)=\frac{\eta}{q-1}$ for any $x$ in $\fq$ different from zero.} 
 $\eta = \Theta\left(1/n^{1-\alpha/4}\right)$
 where  $\alpha \in [0,\frac{1}{4}]$.
To understand why these parameters work, we refer to \cite[\S 2.3]{BL12}.

\subsection{An efficient key-recovery attack}\label{attackBL}

We present here an attack that is different from Brakerski's one \cite{brakerski13}. 
Ours consists in first recovering the secret set $L$ and from
here, one finds directly a suitable vector $\yv$ by solving the system
\begin{equation} \label{decryptionsystempub}
\Pm \yv^T     =  0,\;\sum_{i \in L} y_i  =  1,\;
       y_i=0   \textrm{ for all }  i \notin L.
\end{equation}
Indeed, requiring that $\Pm \yv^T = 0$  is equivalent to  the
equation $\Gm \yv^T=0$ since, by definition, $\Pm = \Sm \Gm$ and since $\Sm$ is invertible.
Therefore, \eqref{decryptionsystempub} is equivalent to the
``secret'' system  \eqref{decryptionsystem}. An attacker may therefore recover $m$ without even knowing $\Gm$ just 
by outputting $\sum_i y_i c_i$ for any solution $\yv$ of \eqref{decryptionsystempub}.
In what follows, we will explain how   $L$ can be recovered from $\Pm$ in 
polynomial time. 

Our attack which recovers $L$ relies heavily on the fact that the public matrix may be viewed as a generator matrix of a code 
$\code{C}$ which 
is quite close to a generalized Reed-Solomon code (or to a Reed-Solomon code if a row consisting only of $1$'s is added
to it). Notice that any punctured version of the code has also this property (a punctured code consists in 
keeping only a fixed subset of positions in a codeword). More precisely, let us introduce 
\begin{defn}
For any $I \subset \{1,\dots{},n\}$ of cardinality $\card{I}$, the
restriction of a code $\code{A}$ of length $n$ is the subset of
$\fq^{|I|}$ defined as 
$\code{A}_I \eqdef \Big \{  \vv  \in \fq^{\card{I}} \mid \exists \av \in \code{A}, \vv = (a_i)_{i \in I} \Big \}.
$
\end{defn}

The results about the unusual dimension of the square of a Reed-Solomon codes which are given in 
Section \ref{sec:basics} prompt us to study the dimension of the square code $\sqc{\code{C}}$ or more generally the
dimension of $\code{C}_I^2$. When $I$ contains no positions in $L$, then $\code{C}_I$ is nothing but a generalized Reed-Solomon 
code and we expect for $\CC^2$ a dimension of $2k-1$ when $|I|$ is larger than $2k-1$. On the other hand, when there are positions in $I$
which also belong to $L$ we expect the dimension to become bigger and  the dimension of $\sqc{\code{C}}$ to behave 
as an increasing function of $|I \cap L|$. This is exactly what happens as shown in the proposition below.
 
\begin{prop} \label{prop:dsg}
Let $I$ be a subset of $\{1,\dots{},n\}$ and set $J
\eqdef I \cap L$. 
If the cardinality of $I$ and
$J$ satisfy $\card{J} \leqslant
\ell-1$ and $\card{I} -  \card{J} \geqslant 2k$ then
\begin{equation}
\dim(\code{C}_I ^2) = 2k - 1 + \card{J}.
\end{equation}
\end{prop}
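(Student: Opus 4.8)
Set $J\eqdef I\cap L$ and $K\eqdef I\setminus J=I\setminus L$, so that $\card{K}=\card{I}-\card{J}\geqslant 2k$, and recall $\ell<k$. Splitting the generator matrix $\Gm$ into its first $\ell$ rows and its last $k-\ell$ rows exhibits $\code{C}_I$ as a sum of two subcodes, $\code{C}_I=\code{D}+\code{E}$: the first $\ell$ rows carry $x_i^{j}$ in \emph{every} column, so $\code{D}=\GRS{\ell}{\xv_I}{\xv_I}=\{(x_iq(x_i))_{i\in I}:\deg q<\ell\}$; the last $k-\ell$ rows vanish on the columns indexed by $L\supseteq J$, so $\code{E}$ is supported on $K$ with $\code{E}_K=\{(x_i^{\ell+1}r(x_i))_{i\in K}:\deg r<k-\ell\}$. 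In particular $\code{C}_K=\GRS{k}{\xv_K}{\xv_K}$, since restricting a codeword of $\code{C}_I$ to $K$ gives $(x_ip(x_i))_{i\in K}$ with $p=q+X^\ell r$ ranging over all of $\fq[X]_{<k}$. The plan is then to apply the rank--nullity theorem to the two restriction maps $\sqc{\code{C}_I}\to\fq^K$ and $\sqc{\code{C}_I}\to\fq^J$.

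For the upper bound, restriction commutes with the star product (if $\av,\bv$ range over $\code{C}_I$, then $\av_K\cwp\bv_K$ ranges over a spanning set of $\sqc{\code{C}_K}$), so $(\sqc{\code{C}_I})_K=\sqc{\code{C}_K}=(\GRS{k}{\xv_K}{\xv_K})^2$, which has dimension $2k-1$ by Proposition~\ref{prop:square} (applicable because $\card{K}\geqslant 2k-1$). The kernel of $\sqc{\code{C}_I}\to\fq^K$ consists of codewords supported on $J$, hence has dimension at most $\card{J}$, so $\dim\sqc{\code{C}_I}\leqslant 2k-1+\card{J}$.

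For the matching lower bound, I would show that both the image and the kernel of $\sqc{\code{C}_I}\to\fq^J$ are as large as this permits. Since $\code{D}\subseteq\code{C}_I$ we have $\sqc{\code{D}}\subseteq\sqc{\code{C}_I}$, and $\sqc{\code{D}}=\GRS{2\ell-1}{\xv_I}{\xv_I\cwp\xv_I}$ (Proposition~\ref{prop:square}, valid since $\card{I}\geqslant 2k>2\ell$) satisfies $(\sqc{\code{D}})_J=\fq^J$ because $\card{J}\leqslant\ell-1<2\ell-1$; hence the image of $\sqc{\code{C}_I}\to\fq^J$ is all of $\fq^J$, of dimension $\card{J}$. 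For the kernel, set $\code{C}_I^{(0)}\eqdef\{\cv\in\code{C}_I:c_i=0\text{ for all }i\in J\}$; then $\code{C}_I^{(0)}\cwp\code{C}_I\subseteq\sqc{\code{C}_I}$ and every vector in it is supported on $K$, so it is enough to show that its restriction to $K$ is the full code $\sqc{\code{C}_K}$, of dimension $2k-1$. Restricting a codeword of $\code{C}_I^{(0)}$ to $K$ gives $(x_ip_0(x_i))_{i\in K}$ with $p_0=g_J\,s+X^\ell r$, where $g_J\eqdef\prod_{i\in J}(X-x_i)$, $\deg s<\ell-\card{J}$ and $\deg r<k-\ell$; writing $P_0$ for the span of all such $p_0$ and $P\eqdef\fq[X]_{<k}$, the claim reduces to the polynomial identity $P_0\cdot P=\fq[X]_{\leqslant 2k-2}$ (the span of products $fg$, $f\in P_0$, $g\in P$). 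This identity is the crux: the monomials $X^\ell,\dots,X^{k-1}$ lie in $P_0$ and $1,\dots,X^{k-1}$ lie in $P$, so $P_0\cdot P$ already contains every $X^c$ with $\ell\leqslant c\leqslant 2k-2$; moreover $g_J\cdot X^b\in P_0\cdot P$ for $0\leqslant b\leqslant\ell-1$, and modulo $\langle X^\ell,\dots,X^{2k-2}\rangle$ these reduce to the classes of $g_J,g_JX,\dots,g_JX^{\ell-1}$ in $\fq[X]/(X^\ell)$. Since $g_J(0)=\prod_{i\in J}(-x_i)\neq 0$ --- this is the only place where the hypothesis $x_i\in\fq^{\times}$ is used --- the class of $g_J$ is invertible in $\fq[X]/(X^\ell)$, so those $\ell$ classes form a basis, whence $P_0\cdot P=\fq[X]_{\leqslant 2k-2}$. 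Therefore the kernel of $\sqc{\code{C}_I}\to\fq^J$ has dimension at least $2k-1$, and rank--nullity gives $\dim\sqc{\code{C}_I}\geqslant\card{J}+2k-1$, which together with the upper bound finishes the proof.

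The step I expect to be the main obstacle is precisely this polynomial identity $P_0\cdot P=\fq[X]_{\leqslant 2k-2}$: although $\code{C}_I^{(0)}\cwp\code{C}_I$ a priori only involves polynomials that are multiples of $g_J$ in low degree, one must nonetheless recover every low-degree monomial modulo the high-degree ones. Everything else is routine bookkeeping with restrictions of GRS codes and Proposition~\ref{prop:square}, and the numerical hypotheses $\card{J}\leqslant\ell-1$ and $\card{I}-\card{J}\geqslant 2k$ (together with $x_i\neq 0$) enter at exactly two points: the surjectivity expressed by $(\sqc{\code{D}})_J=\fq^J$, and the invertibility of $g_J$ modulo $X^\ell$ that drives the identity above.
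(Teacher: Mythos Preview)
Your argument is correct, and it follows a genuinely different route from the paper's own proof.

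The paper proceeds by computing $\sqc{\code{C}_I}$ \emph{explicitly} as a sum of two pieces,
\[
\sqc{\code{C}_I}=\Phi_I\big(\langle x^2,\ldots,x^{2\ell}\rangle\big)+\Phi_{I\setminus J}\big(\langle x^{\ell+2},\ldots,x^{2k}\rangle\big),
\]
and then obtains the dimension by inclusion--exclusion, the nontrivial step being the determination of the intersection; this is where the polynomial $R(x)=\prod_{i\in J}(x-x_i)$ (your $g_J$) enters, via the coprimality of $R$ and $x^{\ell+2}$. Your approach instead splits into an upper bound (rank--nullity for restriction to $K$, which is essentially immediate once one knows $\sqc{\code{C}_K}$ is a GRS code of dimension $2k-1$) and a lower bound (rank--nullity for restriction to $J$), and concentrates all the work in the polynomial identity $P_0\cdot P=\fq[X]_{\leqslant 2k-2}$, which you prove using the invertibility of $g_J$ in $\fq[X]/(X^\ell)$. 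The two proofs exploit the hypothesis $x_i\neq 0$ at the same conceptual point---coprimality of $g_J$ with a power of $X$---but package it differently. The paper's computation has the advantage of yielding an explicit description of $\sqc{\code{C}_I}$ rather than only its dimension; your argument has the advantage that the upper bound is nearly free and the nontrivial content is cleanly isolated in a single polynomial identity, which may make the method easier to transport to related settings.
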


The proof of this proposition can be found  in  Appendix \ref{sec:proof_prop:dsg}.
An attacker can exploit this proposition to mount a distinguisher that recognizes whether a given position belongs 
to the secret set $L$. At  first  a set $I$ which satisfies with high probability the assumptions of
Proposition \ref{prop:dsg} is randomly chosen. Take for instance $|I|=3k$. Then $k_I \eqdef \dim(\code{C}_I^2 )$ is computed. Next,  one element $x$ is removed from $I$ to get a new set
$I'$ and  $k_{I'} = \dim(\code{C}_{I'}^2 )$ is computed. The only
two possible cases are either $x \notin L$ then $k_{I'} = k_{I}$ or $x
\in L$ and then $k_{I'} = k_{I}-1$.
By repeating this procedure, the whole set $J=I \cap L$ is easily recovered. The
next step now is to find all the elements of $L$ that are not in
$I$. One solution is to exchange one element in $I \setminus J$ by another element in
$\{1,\dots{},n\} \setminus I$ and compare the values of $k_I$. If it
increases, it means that the new element belongs to $L$. At the
end of this procedure the set $L$ is totally recovered. This probabilistic algorithm is obviously of  polynomial time complexity
and breaks completely the homomorphic scheme suggested in \cite{BL12}.

\subsection{Inherent weakness of the scheme}

The purpose of this section is to explain why the homomorphic scheme of \cite{BL12} leads
in a natural way to define codes whose square code has an abnormal low dimension. This property which seems 
inherent to the scheme implies that there is little hope to propose a reparation. This fact was also observed in \cite{brakerski13}. 
The point of \cite{BL12} is to define a code which is homomorphic for  addition over 
$\fq$ (all linear codes do the job here) but also protohomorphic for the multiplication over 
$\fq$ \cite[Claim 3.5]{BL12}. This property holds for their scheme, because there is a solution
$\yv$ of \eqref{decryptionsystem} which satisfies for two ciphertexts $\cv$ and $\cv'$ in $\fq^n$ corresponding 
respectively to the plaintexts $m$ and $m'$ in $\fq$:
\begin{equation}\label{eq:homomorphic}
\scp{\yv}{(\cv \cwp \cv')} = mm'
\end{equation}
Recall that $\cv$ and $\cv'$ are given by
\begin{eqnarray}
\cv & = & \xv \Pm + m \textbf{1} +  \ev\\
\cv' & = & \xv' \Pm + m' \textbf{1} +  \ev'
\end{eqnarray}
where $\ev$ and $\ev'$ are error vectors whose support does not intersect $L$.
We also know that $\yv$ satisfies:
\begin{enumerate}
\item $\Gm \yv^T=0$;
\item $\sum_{i=1}^n y_i=1$;
\item $y_i=0$ if $i \notin L$
with $\Pm$ and $\Gm$ related by a multiplication of an invertible matrix $\Sm$, \textit{i.e.}
$\Pm = \Sm \Gm$.
\end{enumerate}
We deduce from this
\begin{eqnarray*}
\yv(\cv \cwp \cv')^T & = &\yv \left( (\xv \Pm + m \textbf{1} +  \ev)\cwp (\xv' \Pm + m' \textbf{1} +  \ev')\right)^T\\
& = & \yv \left( \Pm^T \xv^T \cwp  \Pm^T \xv'^T + \Pm^T \xv^T \cwp m' \textbf{1}^T +  
\Pm^T \xv'^T \cwp m \textbf{1}^T  + m\textbf{1}^T \cwp m' \textbf{1}^T \right)\\
&  & + \yv\left( \ev^T  \cwp ( \Pm^T \xv'^T + m' \textbf{1}^T + \ev'^T)\right)+  \yv \left( (\Pm^T\xv^T + m \textbf{1}^T)\cwp  \ev'^T \right)
\end{eqnarray*}
The terms $\yv\left( \ev^T  \cwp ( \Pm^T \xv'^T + m' \textbf{1}^T + \ev'^T)\right)$ and 
$\yv \left( (\Pm^T\xv^T + m \textbf{1}^T)\cwp  \ev'^T \right)$ are equal to zero because the support of $\yv$ is contained in  $L$
and $\ev^T  \cwp ( \Pm^T \xv'^T + m' \textbf{1}^T + \ev'^T)$, $(\Pm^T\xv^T + m \textbf{1}^T)\cwp  \ev'^T $ have their support
outside $L$. The terms $\yv (\Pm^T \xv^T \cwp m' \textbf{1}^T)=m' \yv \Gm^T\Sm^T\xv^T$ and 
$\yv (\Pm^T \xv'^T \cwp m \textbf{1}^T)=m \yv \Gm^T\Sm^T\xv'^T$ are equal to $0$ from Condition (i) on $\yv$ given
above. Therefore in order to ensure \eqref{eq:homomorphic} we need that
\begin{equation}
\label{eq:condition_square}
\yv \left( \Pm^T \xv^T \cwp  \Pm^T \xv'^T\right)=0.
\end{equation}
has a non zero solution whose support is contained in $L$.
Let $\code{C}$ be the code with generating matrix $\Pm$, that is the set of elements of the form
$\xv \Pm$. Notice that the set of solutions of \eqref{eq:condition_square} is precisely 
the dual of $\sqc{\CC}$. This implies that $\sqc{\CC}$ should not be the whole space $\fq^n$.
This is quite unusual as explained in Section \ref{sec:basics} when the dimension $k$ of $\CC$ satisfies
$k \gg n^{1/2}$. Furthermore, since we are interested in solutions of 
\eqref{eq:condition_square} whose support is contained in $L$ we actually need
that the dual of $\sqc{\CC_L}$ is non empty which is even more abnormal since 
$\CC_L$ is a code of length $3\ell$ and dimension $\ell$. In other words, the Bogdanov and Lee homomorphic scheme 
leads in a natural way to choose codes $\CC$ which have a non-random behavior with respect to the dimension of the
square product.

\section{BBCRS Cryptosystem}
   \label{sec:schemeit}

\subsection{Description of the scheme}\label{ss:schemeit}
The cryptosystem denoted by BBCRS proposed by Baldi et \textit{al.}  in \cite{BBCRS11a} is a variant of the McEliece cryptosystem \cite{McEliece78} which replaces the permutation matrix used to hide the secret generator matrix by one of the form $ \Pim + \Rm$ where
$\Pim$ is a sum of $m$ permutation matrices and $\Rm$ is a matrix of rank $z$. 
Notice that the case $m=1$ and $z=0$ corresponds to the McEliece cryptosystem based on generalized Reed-Solomon codes (which was broken in \cite{SidelShesta92}).
Here we focus on the case where $z =1$, and $\Pim$ is a single permutation matrix which concerns all the parameters suggested in Section 5 of \cite{BBCRS11a}. There is actually a good reason
why the case $m=1,z=1$ stands out here:
$m=1$ is precisely the case  which gives by far the smallest key sizes when the parameters are chosen
so as to avoid generic decoding techniques aiming at recovering the message. 
Moreover, there is a big prize coming with increasing the value of $z$.
Basically the deciphering time is proportional to $q^z T$ where $q$ is the size of the field over which the
public code is defined (it is typically of the same order as the length $n$ of the code) and $T$ is the decoding time of the
GRS code used in this scheme. Roughly speaking, deciphering is about $n^z$ more complex than in a  McEliece cryptosystem
based on GRS codes. It was assumed in  \cite{BBCRS11a} that the gain in the public key size of the scheme would outweigh the
big loss in deciphering time. For this reason it is certainly questionable whether schemes with $z \geq 2$ could be really practical.
After the attack, which is detailed in this section, appeared on \texttt{www.arXiv.org} in \cite{Got12a}, a new version of
\cite{BBCRS11a} came out in \cite{BBCRS11b} where a slight generalization of $\Pim$ is considered,
namely $\Pim$ is just sparse now and the actual parameters proposed in \cite{BBCRS11b} 
suggest now matrices $\Pim$ with a row/column weight between $1$ and $2$. The 
attack proposed here does not apply directly to these new parameters anymore. It raises
the issue whether a generalization of our attack would be able to break the new
parameters, but this is beyond the scope of this paper.

From the authors' point of view, the idea underlying these new transformations was they would
allow to use  families of codes that were shown insecure in the
original McEliece cryptosystem. In particular, it would become possible to use GRS codes in this new framework.
The scheme can be summarized as follows.

\begin{description}
	\item \textbf{Secret key.} 
          \begin{itemize}
          \item $\Gms$ is a generator matrix of a GRS code of length $n$ and dimension $k$ over $\fq$,  
          \item $ \Qm  \eqdef \Pim + \Rm $ where
            $\Pim$ is an $n \times n$ permutation matrix;
          \item $\Rm$ is a rank-one matrix over $\fq$ such that $\Qm$
            is invertible. In other words there exist 
             $\alphav \eqdef (\alpha_1, \dots{}, \alpha_n) $ and $\betav \eqdef (\beta_1, \dots{}, \beta_n)$ in $\fq^n$
 such that  $\Rm \eqdef \alphav^T \betav$. 
          \item $\Sm$ is a $k \times k$ random invertible  matrix over $\fq$.
          \end{itemize}
        \item \textbf{Public key.} $\displaystyle \Gmp \eqdef \Sm^{-1} \Gms \Qm^{-1}$. 
          
	\item \textbf{Encryption.} The ciphertext $\cv \in \fq^n$ of a plaintext
          $\mv \in \fq^k$ is obtained by drawing at random $\ev$
          in $\fq^n$ of weight less than or equal  to $\frac{n-k}{2}$ and computing
          $\displaystyle \cv \eqdef \mv \Gmp  +  \ev$. 
          
	\item \textbf{Decryption.} It consists in performing the three
          following steps:
	\begin{enumerate}
	\item Guessing the value of  $\ev  \Rm$;
	\item Calculating $\cv' \eqdef \cv \Qm - \ev \Rm= \mv \Sm^{-1}\Gms + \ev  \Qm - \ev \Rm =  \mv \Sm^{-1}\Gms + \ev  \Pim $
	and using the decoding algorithm of the GRS code to recover
	$\mv \Sm^{-1}$ from the knowledge of $\cv'$;
	\item Multiplying the result of the decoding by $\Sm$ to recover $\mv$.
	\end{enumerate}
\end{description}

The first step of the decryption, that is guessing the value $\ev
\Rm$, boils down to trying $q$ elements (in the worst case) since
 $\ev \Rm = \ev \alphav^T \betav= \gamma \betav$ where $\gamma$
 is an element of $\fq$.

 \subsection{Key-recovery attack when  $2k+2 < n$}\label{sec:attack_Baldi}
We define $\Csec$ and $\Cpub$ to be the codes
generated by the matrices $\Gms$ and $\Gmp$ respectively. 
We denote by $n$ the length of these codes and by $k$ their 
dimension. We  assume in this subsection that
\begin{equation}
\label{eq:small_rate}
2k +2 < n
\end{equation}
The case of rates larger than $1/2$ will be treated in Subsection~\ref{subsec:dual_for_high_rates}.
As explained in Subsection \ref{ss:schemeit}, $\Csec$ is a GRS code. 
 It will be convenient to bring in the code
 \begin{equation}
   \label{eq:CC}
   \CC \eqdef \Csec \Pim^{-1}.
 \end{equation}
This code $\CC$, being a permutation of a GRS code, is itself a GRS 
code. So there are elements $\xv$ and $\yv$ in $\fq^n$ such that
$\displaystyle \CC = \GRS{k}{\xv}{\yv}$. 
There is a simple relation between $\Cpub$ and $\CC$ as explained
by Lemma~\ref{lem:structure} below.

First, notice that, since $\Rm$ has rank $1$, then so does $\Rm\Pim^{-1}$.
Hence there exist $\av$ and $\bv$ in $\fq^n$ such that:
\begin{equation}
\label{eq:RPi-1}
\Rm \Pim^{-1} = \bv^T \av.
\end{equation}

\begin{lem}\label{lem:structure}
Let $\lambdav \eqdef -\frac{1}{1+\scp{\av}{\bv}} \bv$.
For any
$\cv$ in $\Cpub$ there exists $\pv$ in $\CC$ such that: 
\begin{equation}
\label{eq:structure}
\cv = \pv + (\scp{\pv}{\lambdav}) \av.
\end{equation}
\end{lem}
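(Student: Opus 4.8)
The plan is to take an arbitrary codeword $\cv$ of $\Cpub$, rewrite it through $\Gms$ and $\Qm^{-1}$, and then make $\Qm^{-1}$ explicit by exploiting that $\Rm$ — hence $\Rm\Pim^{-1}$ — has rank one. Every $\cv \in \Cpub$ is of the form $\cv = \mv\Gmp = \mv\Sm^{-1}\Gms\Qm^{-1}$ for some $\mv \in \fq^k$, so $\cv = \word{s}\,\Qm^{-1}$ where $\word{s} \eqdef \mv\Sm^{-1}\Gms$ is a codeword of $\Csec$. It therefore suffices to understand how the single linear map $\Qm^{-1}$ acts, and this is exactly where the rank-one structure enters.

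First I would write $\Qm\Pim^{-1} = \Im_n + \Rm\Pim^{-1} = \Im_n + \bv^T\av$ using \eqref{eq:RPi-1}, so that $\Qm^{-1} = \Pim^{-1}\bigl(\Im_n + \bv^T\av\bigr)^{-1}$. Since $\Im_n + \bv^T\av$ is a rank-one perturbation of the identity, a direct verification — expand the product and use that $\av\bv^T = \scp{\av}{\bv}$ is a scalar — shows
\begin{equation*}
\bigl(\Im_n + \bv^T\av\bigr)^{-1} = \Im_n - \frac{1}{1+\scp{\av}{\bv}}\,\bv^T\av .
\end{equation*}
Substituting, $\cv = \word{s}\,\Pim^{-1}\bigl(\Im_n - \tfrac{1}{1+\scp{\av}{\bv}}\bv^T\av\bigr)$. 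Setting $\pv \eqdef \word{s}\,\Pim^{-1}$, we get $\pv \in \Csec\Pim^{-1} = \CC$ by \eqref{eq:CC}; and since $\pv\bv^T = \scp{\pv}{\bv}$ is a scalar, this rearranges to $\cv = \pv - \tfrac{\scp{\pv}{\bv}}{1+\scp{\av}{\bv}}\,\av = \pv + \scp{\pv}{\lambdav}\,\av$ with $\lambdav = -\tfrac{1}{1+\scp{\av}{\bv}}\bv$, which is exactly \eqref{eq:structure}.

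The only point that genuinely needs care — and which I would dispatch at the very beginning of the proof — is that $1+\scp{\av}{\bv}\neq 0$, so that $\lambdav$ and the inverse above are well defined. This is forced by the standing assumption that $\Qm$ is invertible: as $\Pim$ is invertible, so is $\Qm\Pim^{-1} = \Im_n + \bv^T\av$, and $\det(\Im_n + \bv^T\av) = 1+\scp{\av}{\bv}$. Beyond this, the argument is a short and routine matrix computation, so I do not anticipate any real obstacle.
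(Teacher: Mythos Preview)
Your proof is correct and follows essentially the same route as the paper: factor $\Qm = (\Im_n + \bv^T\av)\Pim$, invert the rank-one perturbation $\Im_n + \bv^T\av$ explicitly, and read off the relation $\cv = \pv + (\scp{\pv}{\lambdav})\av$ with $\pv = \word{s}\,\Pim^{-1}\in\CC$. The only cosmetic difference is that you justify $1+\scp{\av}{\bv}\neq 0$ via the determinant identity $\det(\Im_n+\bv^T\av)=1+\scp{\av}{\bv}$, whereas the paper argues that if $\scp{\av}{\bv}=-1$ then $\Im_n+\bv^T\av$ is a nontrivial projection; both are standard and equivalent.
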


\begin{proof}
  Appendix \ref{appendixBaldi}.
\end{proof}

\begin{rem}
  Notice that the definition of $\lambdav$ makes sense if and only of
  $\scp{\av}{\bv}\neq -1$. This actually holds since $\Qm$ is assumed to be
  invertible (See Lemmas~\ref{lem:PQinv} and \ref{lem:inverse}
  in Appendix~\ref{appendixBaldi}).
\end{rem}

From now on, we make the assumption that 
\begin{equation}
\label{eq:assumption}
\lambdav \notin \CC^\perp \ \textrm{and}\ \av \notin \CC.
\end{equation}
If this is not the case then $\Cpub=\CC=\GRS{k}{\xv}{\yv}$ and there is a
straightforward attack by applying the Sidelnikov and Shestakov algorithm
\cite{SidelShesta92} or the alternative attack we propose in Section~\ref{sec:schemeit}. It  finds $(\xv',\yv')$ that expresses $\Cpub$ as $\GRS{k}{\xv'}{\yv'}$. 
Our attack relies on identifying a code of dimension $k - 1$
that is both a subcode of  $\Cpub$ and the
GRS code $\CC$. It consists more precisely of
codewords $\pv + (\scp{\pv}{\lambdav}) \av$ with $\pv$ in $\CC$ such that  
$\scp{\pv}{\lambdav} = 0$. This particular code which is denoted by
$\CC_{\lambdav^\perp}$ is therefore:
\begin{equation}
  \label{eq:Clp}
\CC_{\lambdav^\perp} \eqdef \CC \cap <\lambdav>^\perp
\end{equation}
where $<\lambdav>$ denotes the vector space spanned by $\lambdav$. 
It is a subspace of $\Cpub$ of codimension $1$ if Assumption~(\ref{eq:assumption}) holds.
Here is an inclusion diagram for the involved codes. 
\begin{equation}
\label{eq:inc_diagram}
\xymatrix{\relax \Cpub  \ar@{-}[rd]_{\textrm{Codim} 1} & &
  \CC \ar@{-}[ld]^{\textrm{Codim} 1} \\
             & \CC_{\lambdav^\perp} & }
\end{equation}

\paragraph{Summary of the attack.}
Before describing it in depth, let us give the main steps of the attack.

\begin{enumerate}[\emph{Step} 1.]
\item Compute a basis of $\Clp$ using distinguisher-based methods.
   See \S~\ref{subsec:FirstStep} for further details.
\item Use Wieschebrink's method \cite{Wie10},
  which asserts that: $\Clp^2 = \CC^2$ to
  recover the structure of $\CC^2$ and then that of $\CC$. 
  See \S~\ref{subsec:SecondStep}.
\item Compute a pair $(\av_0, \lambdav_0)$
  called a \emph{valid pair} (Definition \ref{defn:valid}),
  which will have similar properties than the pair $(\av, \lambdav)$
  (see~(\ref{eq:RPi-1}) and Lemma~\ref{lem:structure} for the definitions
  of $\av$ and $\lambdav$). See \S~\ref{subsec:ThirdStep}.
\item Thanks to the valid pair, one can decrypt any ciphered message.
  See \S~\ref{subsec:FinalStep}.
\end{enumerate}

\subsubsection{Computing a basis of $\Clp$}
\label{subsec:FirstStep}
The inclusion relations described in the diagram (\ref{eq:inc_diagram})
strongly suggest that $\sqc{\Cpub}$ should have an unusual
low dimension since  $\sqc{\CC}$ has dimension  $2k-1$ by
Proposition \ref{prop:square}. 
More  exactly we have the following result.

\begin{prop}
\label{prop:dimension_sqCpub} The square code of $\Cpub$ satisfies
$\dim\left( \sqc{\Cpub} \right) \leqslant  3k-1$.
\end{prop}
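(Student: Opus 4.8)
The plan is to use the structural description of $\Cpub$ from Lemma~\ref{lem:structure}, namely that every codeword $\cv \in \Cpub$ has the form $\cv = \pv + (\scp{\pv}{\lambdav})\av$ for some $\pv \in \CC$. This exhibits $\Cpub$ as the image of $\CC$ under the linear map $\varphi \colon \pv \mapsto \pv + (\scp{\pv}{\lambdav})\av$. The key observation is that the image of $\varphi$ is contained in $\CC + \langle \av \rangle$, which is a code of dimension at most $k+1$. More usefully, I would write $\Cpub \subseteq \CC + \langle \av\rangle$ and bound the square code of the right-hand side.

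First I would expand $\sqc{\Cpub}$ using the inclusion $\Cpub \subseteq \CC + \langle\av\rangle$:
\begin{eqnarray*}
\sqc{\Cpub} & \subseteq & \sqc{\left(\CC + \langle\av\rangle\right)}\\
& \subseteq & \sqc{\CC} + \CC\cwp\langle\av\rangle + \langle \av\cwp\av\rangle\\
& = & \sqc{\CC} + \av\cwp\CC + \langle\av\cwp\av\rangle.
\end{eqnarray*}
Now I would bound the dimension of each summand: $\dim\sqc{\CC} = 2k-1$ by Proposition~\ref{prop:square} (using that $\CC = \GRS{k}{\xv}{\yv}$ and $2k+2 < n$, so $k \leq (n+1)/2$); $\dim(\av\cwp\CC) \leq \dim\CC = k$ since $\av\cwp(\cdot)$ is a linear map; and $\dim\langle\av\cwp\av\rangle \leq 1$. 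Summing gives $\dim\sqc{\Cpub} \leq (2k-1) + k + 1 = 3k$.

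To sharpen this to $3k-1$, I would note that $\av\cwp\av$ already lies in $\sqc{\left(\CC + \langle\av\rangle\right)}$ as a product within that code, but more to the point one can absorb the extra term: since $\av \notin \CC$ by Assumption~(\ref{eq:assumption}), consider instead that $\sqc{\Cpub}$ is spanned by products $\varphi(\pv)\cwp\varphi(\qv)$ with $\pv,\qv \in \CC$. Expanding, $\varphi(\pv)\cwp\varphi(\qv) = \pv\cwp\qv + (\scp{\qv}{\lambdav})\,\pv\cwp\av + (\scp{\pv}{\lambdav})\,\qv\cwp\av + (\scp{\pv}{\lambdav})(\scp{\qv}{\lambdav})\,\av\cwp\av$, so $\sqc{\Cpub} \subseteq \sqc{\CC} + \av\cwp\CC$, because the term $\av\cwp\av = \av\cwp(c\pv)/(c\scp{\pv}{\lambdav})$ for suitable $\pv$ with $\scp{\pv}{\lambdav}\neq 0$ is already in $\av\cwp\CC$ (such $\pv$ exists since $\lambdav\notin\CC^\perp$). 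Hence $\dim\sqc{\Cpub} \leq \dim\sqc{\CC} + \dim(\av\cwp\CC) \leq (2k-1) + k = 3k-1$, which is the claim.

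The main obstacle is the bookkeeping needed to shave off the last dimension: one must check that $\av\cwp\av$ genuinely lies in $\av\cwp\CC$ rather than being an independent generator, which is exactly where Assumption~(\ref{eq:assumption}) (specifically $\lambdav\notin\CC^\perp$) gets used — it guarantees a codeword $\pv\in\CC$ with $\scp{\pv}{\lambdav}\neq 0$, so that $\varphi(\pv)$ has a genuine $\av$-component and the span of the $\av\cwp\CC$ part is not artificially inflated. Everything else is a routine dimension count on a sum of three subspaces using the basic bound $\dim(\code{A}\cwp\code{B}) \leq \dim\code{A}\dim\code{B}$ from Proposition~\ref{prop:dimprod} together with Proposition~\ref{prop:square}.
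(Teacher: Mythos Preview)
Your first pass correctly gives $\dim \sqc{\Cpub}\leq 3k$, but the step shaving off the last unit is wrong. You claim $\av\cwp\av\in \av\cwp\CC$ via ``$\av\cwp\av=\av\cwp(c\pv)/(c\,\scp{\pv}{\lambdav})$''; however the right-hand side equals $\tfrac{1}{\scp{\pv}{\lambdav}}\,\av\cwp\pv$, a scalar multiple of $\av\cwp\pv$, not $\av\cwp\av$. In fact $\av\cwp\av\in\av\cwp\CC$ would force $\av\in\CC$ (as soon as $\av$ has full support), which is precisely what Assumption~(\ref{eq:assumption}) rules out. So with your decomposition $\sqc{\Cpub}\subseteq \sqc{\CC}+\av\cwp\CC+\langle\av\cwp\av\rangle$ you genuinely cannot do better than $3k$: the inclusion $\Cpub\subseteq \CC+\langle\av\rangle$ throws away exactly the one linear relation you need.

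The paper gets the extra unit by splitting $\Cpub$ rather than $\CC+\langle\av\rangle$. It uses that $\Clp=\CC\cap\langle\lambdav\rangle^\perp$ is a codimension-$1$ subspace of $\Cpub$ which sits \emph{inside} the GRS code $\CC$. Picking a basis $\bv_1,\dots,\bv_{k-1},\bv_k$ of $\Cpub$ with $\bv_1,\dots,\bv_{k-1}$ spanning $\Clp$, all products $\bv_i\cwp\bv_j$ with $i,j\leq k-1$ lie in $\sqc{\Clp}\subseteq\sqc{\CC}$, hence in a space of dimension $\leq 2k-1$; only the $k$ products $\bv_i\cwp\bv_k$ ($1\leq i\leq k$) can be new, giving $(2k-1)+k=3k-1$. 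Equivalently, $\sqc{\Cpub}\subseteq \sqc{\Clp}+\bv_k\cwp\Cpub$. The crucial difference with your argument is that one summand in the decomposition of $\Cpub$ is already a subcode of the GRS code, so its square stays small --- whereas in your decomposition $\CC\oplus\langle\av\rangle$ the extra line $\langle\av\rangle$ is foreign to $\CC$ and its square contributes an honest extra dimension.
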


\begin{proof}
To prove the result, let $(\bv_1, \dots , \bv_{k-1}, \bv_k)$ be a basis of $\Cpub$,
such that $(\bv_1, \dots , \bv_{k-1})$ is a basis of $\Clp$. Since $\Clp$ 
is a subcode of the GRS code $\CC$, which is of dimension $k$, we have then $\dim\left(\sqc{\Clp} \right) \leq 2k - 1$, 
and the vectors $\bv_i \cwp \bv_j$ with $1\leq i,j\leq k$ generate $\sqc{\Cpub}$.
Among these vectors only $\bv_k \cwp \bv_i = \bv_i \cwp \bv_k$ for $1 \leq i \leq k$ are possibly not in $\sqc{\Clp}$. Therefore, 
$\dim\left (\sqc{\Cpub} \right ) \leq 2k-1 + k $.
\end{proof}

\begin{rem}
Experimentally it has been observed that the upper-bound is sharp.
Indeed, the dimension of $\sqc{\Cpub}$ has always been found
to be equal to $3k-1$ in all our experiments when choosing randomly the
codes and $\Qm$ with parameters of \cite{BBCRS11a} of Example~1 and~2. 
In our tests we randomly picked $1000$ GRS codes with rate $\leq 1/2$, apply random transformations $\Qm^{-1}$ on them. 
\end{rem}

The second observation is that when a basis 
$\gv_1,\dots,\gv_k$ of $\Cpub$ is chosen together with $l$ other random elements $\zv_1,\dots,\zv_l \in  \Cpub$, then we may expect that 
the dimension of the vector space generated by all products $\zv_i \cwp \gv_j$ with $i$ in $\{1,\dots,l\}$ and $j$ in 
$\{1,\dots,k\}$ is the dimension of the full space $\sqc{\Cpub}$ when
$l \geqslant 3$. This is indeed the case when $l \geqslant 4$ 
but it is not true for $l=3$ since we have the following result.

\begin{prop}\label{prop:three} Let $\code{B}$ be the linear space spanned by $\Big
  \{ \zv_i \cwp \gv_j ~|~ 1 \leqslant i \leqslant 3 \text{ and } 1 \leqslant j
  \leqslant k \Big \}$ then it holds:
  $$
  \dim \left ( \code{B}   \right ) \leqslant 3k-3.
  $$
\end{prop}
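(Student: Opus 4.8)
The plan is to exploit the special structure of $\Cpub$ coming from Lemma~\ref{lem:structure}, namely that every codeword $\cv \in \Cpub$ is of the form $\cv = \pv + (\scp{\pv}{\lambdav})\av$ for some $\pv \in \CC$. So I would begin by writing, for $1 \le i \le 3$, $\zv_i = \pv_i + \mu_i \av$ with $\pv_i \in \CC$ and $\mu_i \eqdef \scp{\pv_i}{\lambdav}$, and similarly $\gv_j = \qv_j + \nu_j \av$ with $\qv_j \in \CC$ and $\nu_j \eqdef \scp{\qv_j}{\lambdav}$. Expanding the star product gives
$$
\zv_i \cwp \gv_j = \pv_i \cwp \qv_j + \mu_i (\av \cwp \qv_j) + \nu_j (\av \cwp \pv_i) + \mu_i \nu_j (\sqc{\av}).
$$
The first family of terms $\pv_i \cwp \qv_j$ lives in $\sqc{\CC}$, which has dimension $2k-1$; the terms $\av \cwp \qv_j$ and $\av \cwp \pv_i$ all lie in the single subspace $\av \cwp \CC$, which has dimension at most $k$; and $\sqc{\av}$ spans a space of dimension at most $1$. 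So crudely $\code{B} \subseteq \sqc{\CC} + \av\cwp\CC + \langle \sqc{\av}\rangle$, but that only gives a bound around $3k$, not $3k-3$. The improvement must come from the fact that there are only three vectors $\zv_i$, so the "mixed" and "pure $\av$" contributions are severely constrained.

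The key step is therefore a dimension count that is tighter than the naive one. I would argue as follows. The span of $\{\pv_i \cwp \qv_j\}$ is contained in $\sqc{\CC}$; but more is true — the $\qv_j$ span $\CC$ (since the $\gv_j$ are a basis of $\Cpub$ and $\CC$ is the image of $\Cpub$ under $\cv \mapsto \pv$, which is a bijection), so $\{\pv_i \cwp \qv_j : 1 \le j \le k\}$ spans $\pv_i \cwp \CC$ for each fixed $i$, and the union over the three values of $i$ spans $\pv_1\cwp\CC + \pv_2\cwp\CC + \pv_3\cwp\CC \subseteq \sqc{\CC}$, which has dimension at most $2k-1$. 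Next, the vectors $\av\cwp\qv_j$ for $1\le j\le k$ span $\av\cwp\CC$, but in $\code{B}$ they appear only with the coefficients $\mu_i$, so the mixed terms $\mu_i(\av\cwp\qv_j) + \nu_j(\av\cwp\pv_i)$ summed appropriately contribute a subspace of $\av\cwp\CC$ whose dimension I claim is at most $k-1$ rather than $k$: indeed one can write the total $\av\cwp\CC$-component of a generic element of $\code{B}$ and observe that $\av\cwp\pv_i$ lies in $\av\cwp\CC$ too, so all mixed contributions land in $\av\cwp\CC$, but there is at least a one-dimensional "deficiency" forced by the structure. I would make this precise by noting that $\code{B}$ is actually contained in $\sqc{\langle \pv_1,\pv_2,\pv_3\rangle + \CC\text{-part}} $...

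Cleaner route, which I would actually pursue: observe that $\code{B}$ is spanned by products $\zv_i \cwp \gv_j$ and that $\langle \zv_1,\zv_2,\zv_3\rangle \cwp \Cpub$ contains $\code{B}$; but by the bilinearity, $\dim\code{B} \le \dim(\langle\zv_1,\zv_2,\zv_3\rangle) \cdot \dim(\Cpub)$ is again too weak. Instead use the map $\Cpub \to \CC$, $\cv\mapsto\pv$, and its "defect" $\av$: since $\zv_i\cwp\gv_j - (\pv_i\cwp\qv_j)$ always lies in $\av\cwp(\CC + \langle\av\rangle)$, a space of dimension at most $k+1$, and since $\{\pv_i\cwp\qv_j\}$ spans at most $\sum_i \pv_i\cwp\CC$ which has dimension at most $2k-1$ (it is inside $\sqc{\CC}$ and is generated by $3k$ vectors, but the three subspaces $\pv_i\cwp\CC$ pairwise intersect in at least... ) — the quantitative gain of $3$ comes precisely from: (a) $\dim(\av\cwp\CC + \langle\av\rangle) \le k+1$ but the three common scalar directions collapse, and (b) among the $3k$ generators $\pv_i\cwp\qv_j$ there are relations. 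I would formalize by exhibiting an explicit surjection from $\langle \zv_i\cwp\gv_j\rangle$ onto $\code{B}$ whose source has dimension $\le 3k-3$, or equivalently by showing $\code{B} \subseteq \code{W}$ for an explicitly constructed $\code{W}$ of dimension $3k-3$ built from $\sqc{\CC}$ (dimension $2k-1$), $\av\cwp\langle\pv_1,\pv_2,\pv_3,\qv_1,\dots\rangle$ suitably truncated, and checking the count $ (2k-1) + (k-2) = 3k-3$.

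The main obstacle is exactly this bookkeeping: getting the constant right (the "$-3$" rather than "$-1$" or "$-2$"). I expect the honest way is to pick the basis $\gv_1,\dots,\gv_k$ of $\Cpub$ adapted so that $\gv_1,\dots,\gv_{k-1}$ span $\Clp$ (so $\nu_j = \scp{\qv_j}{\lambdav} = 0$ for $j < k$ and only $\nu_k \ne 0$), which kills most of the $\av\cwp\pv_i$ terms and reduces the mixed part dramatically, and symmetrically to normalize the three $\zv_i$ so that at most one of them has $\mu_i \ne 0$; then directly count. With those normalizations the products $\zv_i\cwp\gv_j$ for the "generic" indices reduce to elements of $\sqc{\Clp}$ (dimension $\le 2k-1$), and only a handful of products involving the distinguished indices produce new directions, and one tallies at most $3k-3$. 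I would present it with that adapted-basis choice, since it turns the constant-chasing into a short finite check.
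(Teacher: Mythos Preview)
Your approach has a genuine gap: you are trying to squeeze the bound $3k-3$ out of the special structure of $\Cpub$ supplied by Lemma~\ref{lem:structure}, but that structure is simply not where the ``$-3$'' comes from. The containment you obtain,
\[
\code{B} \subseteq \sqc{\CC} + \av \cwp \CC + \langle \sqc{\av}\rangle,
\]
gives at best $\dim\code{B}\leq (2k-1)+k+1=3k$; even the sharper $\code{B}\subseteq \sqc{\Cpub}$ together with Proposition~\ref{prop:dimension_sqCpub} only gives $3k-1$. No choice of adapted basis on the $\gv_j$'s or the $\zv_i$'s will close the remaining gap, because there is no ambient space of dimension $3k-3$ naturally containing $\code{B}$: the deficiency is caused by \emph{relations among the $3k$ generators} $\zv_i\cwp\gv_j$, not by $\code{B}$ sitting inside a smaller space.

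The paper's proof is purely combinatorial and does not use Lemma~\ref{lem:structure} or the GRS structure at all. Since each $\zv_i\in\Cpub=\langle\gv_1,\dots,\gv_k\rangle$, write $\zv_i=\sum_j a_{ij}\gv_j$. Then for each pair $(i,i')$ with $i\neq i'$ one has
\[
\sum_{j} a_{i'j}\,(\zv_i\cwp\gv_j)\;-\;\sum_{j} a_{ij}\,(\zv_{i'}\cwp\gv_j)
\;=\;\zv_i\cwp\zv_{i'}-\zv_{i'}\cwp\zv_i\;=\;0,
\]
which is a nontrivial linear relation among the generators $\zv_i\cwp\gv_j$. The three pairs $(1,2),(1,3),(2,3)$ give three relations, and these are linearly independent as soon as the $\zv_i$'s span a space of dimension at least $2$ (the degenerate low-dimensional cases are even easier). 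Hence $\dim\code{B}\leq 3k-3$. This is the idea you are missing; your decomposition via $\pv_i,\qv_j,\av$ never produces these symmetry relations, which is why your bookkeeping keeps stalling at $3k$ or $3k-1$.
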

A proof of this phenomenon is given in Appendix~\ref{proof:dim3k_3}. 
Experimentally, it turns out that almost always this upper-bound
is tight and the dimension is generally $3k-3$. But if we assume
now that $\zv_1$, $\zv_2$, $\zv_3$ all belong to
$\CC_{\lambdav^\perp}$, which happens with probability $\frac{1}{q^3}$ since $\CC_{\lambdav^\perp}$ is 
a subspace of $\Cpub$ of codimension $1$ (at least when~(\ref{eq:assumption}) holds),
then the vectors $\zv_i \cwp \gv_j$ generate a subspace with  a much
smaller dimension.
\begin{prop}\label{prop:attack}
If $\zv_i$ is in $\CC_{\lambdav^\perp}$ for $i$ in $\{1,2,3\}$
then for all $j$ in 
$\{1,\dots,k\}$:
\begin{eqnarray}
 \zv_i \cwp \gv_j    ~\subset~  \sqc{\CC} ~+~ <\zv_1 \cwp \av> ~+~
<\zv_2 \cwp \av> ~+~ <\zv_3 \cwp \av> \label{eq:idea}
\end{eqnarray}
and if $\code{B}$ is the linear code spanned by $\big
  \{ \zv_i \cwp \gv_j ~|~ 1 \leqslant i \leqslant 3 \text{~and~} 1 \leqslant j
  \leqslant k \big \}$ then
\begin{eqnarray}
 \dim \left( \code{B} \right ) \leqslant 2k+2.  \label{eq:consequence}
\end{eqnarray}
\end{prop}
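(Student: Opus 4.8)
The plan is to derive both parts of the proposition directly from Lemma~\ref{lem:structure} together with the very definition~\eqref{eq:Clp} of $\CC_{\lambdav^\perp}$; no computation beyond simple linear bookkeeping should be needed. First I would record the two structural facts used throughout. On the one hand, since $\CC_{\lambdav^\perp} = \CC \cap \langle\lambdav\rangle^\perp$ by~\eqref{eq:Clp}, every $\zv_i$ (for $i\in\{1,2,3\}$) lies in the GRS code $\CC$ itself. On the other hand, applying Lemma~\ref{lem:structure} to each basis vector $\gv_j$ of $\Cpub$ yields a vector $\pv_j\in\CC$ and a scalar $\mu_j\eqdef\scp{\pv_j}{\lambdav}\in\fq$ with $\gv_j=\pv_j+\mu_j\av$.

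Next I would prove the inclusion~\eqref{eq:idea}. Fixing $i\in\{1,2,3\}$ and $j\in\{1,\dots,k\}$ and using bilinearity of $\cwp$,
$$
\zv_i \cwp \gv_j \;=\; \zv_i \cwp \pv_j \;+\; \mu_j\,(\zv_i \cwp \av).
$$
The first summand lies in $\sqc{\CC}$ because both $\zv_i$ and $\pv_j$ belong to $\CC$; the second lies in $\langle \zv_i\cwp\av\rangle$. Hence $\zv_i\cwp\gv_j$ belongs to $\sqc{\CC}+\langle\zv_1\cwp\av\rangle+\langle\zv_2\cwp\av\rangle+\langle\zv_3\cwp\av\rangle$, which is precisely~\eqref{eq:idea}.

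Finally, for the dimension bound~\eqref{eq:consequence}, observe that $\code{B}$ is by definition spanned by the $\zv_i\cwp\gv_j$, so the inclusion just established gives $\code{B}\subset \sqc{\CC}+\langle\zv_1\cwp\av\rangle+\langle\zv_2\cwp\av\rangle+\langle\zv_3\cwp\av\rangle$. Assumption~\eqref{eq:small_rate} ($2k+2<n$) forces $k\le(n+1)/2$, so Proposition~\ref{prop:square} applies and gives $\sqc{\CC}=\GRS{2k-1}{\xv}{\yv\cwp\yv}$, a code of dimension $2k-1$ (note $2k-1<n$, again by~\eqref{eq:small_rate}). Adding at most three one-dimensional subspaces, $\dim\code{B}\le (2k-1)+3 = 2k+2$, as claimed.

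I do not anticipate a genuine obstacle: the argument is a short linear-algebra computation once Lemma~\ref{lem:structure} is in hand. The only two points requiring a moment of care are (i) using a \emph{single} decomposition $\gv_j=\pv_j+\mu_j\av$ per basis vector, so that the three spaces $\langle\zv_i\cwp\av\rangle$ ($i=1,2,3$) are the only extra directions that can appear; and (ii) checking that~\eqref{eq:small_rate} really does place us in the range $k\le(n+1)/2$ where Proposition~\ref{prop:square} pins $\dim\sqc{\CC}$ down to exactly $2k-1$ rather than merely bounding it by $\binom{k+1}{2}$.
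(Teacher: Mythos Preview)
Your proof is correct and follows essentially the same approach as the paper's own proof: decompose each $\gv_j$ via Lemma~\ref{lem:structure} as $\pv_j+(\scp{\pv_j}{\lambdav})\av$, observe that $\zv_i\in\CC$, and conclude that $\zv_i\cwp\gv_j$ lies in $\sqc{\CC}$ plus the three extra lines $\langle\zv_i\cwp\av\rangle$. Your write-up is in fact slightly more careful than the paper's, since you explicitly invoke~\eqref{eq:small_rate} and Proposition~\ref{prop:square} to justify $\dim\sqc{\CC}=2k-1$, whereas the paper simply asserts this value.
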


\begin{proof}
Assume that the $\zv_i$'s all belong to $\CC_{\lambdav^\perp}$. For every $\gv_j$ there exists
$\pv_j$ in $\CC$ such that $\gv_j=\pv_j + \scp{\lambdav}{\pv_j}\av$.
We obtain now
\begin{eqnarray}
\zv_i \cwp \gv_j & = & \zv_i \cwp (\pv_j + (\scp{\lambdav}{\pv_j})\av) \nonumber \\
& =& \zv_i \cwp \pv_j + (\scp{\lambdav}{\pv_j})\zv_i \cwp \av \nonumber \\
& \in & \sqc{\CC} + <\zv_1 \cwp \av> + <\zv_2 \cwp \av> + <\zv_3 \cwp \av>.
\end{eqnarray}
This proves the first part of the proposition, the second part follows immediately from 
the first part since it implies that the dimension of
the vector space generated by the $\zv_i \cwp \gv_j$'s is upperbounded by the
sum of the dimension of $\sqc{\CC}$ (that is $2k-1$) and the dimension of the 
vector space spanned by the $\zv_i \cwp \av$'s (which is at most $3$).
\end{proof}

The upper-bound 
given in \eqref{eq:consequence} on the dimension
follows immediately from \eqref{eq:idea}. This leads to Algorithm \ref{algo:Clambdaperp} which computes
a basis of $\CC_{\lambdav^\perp}$. It is essential that the condition
in \eqref{eq:small_rate} holds in order to distinguish the case when the 
dimension is less than or equal to $2k+2$ from higher dimensions.
\begin{algorithm}[t]
  {\bf Input: } A basis $\{\gv_1,\dots,\gv_k\}$ of $\Cpub$.\\
  {\bf Output : } A basis $\mathcal{L}$ of $\CC_{\lambdav^\perp}$.

  \begin{algorithmic}[1]
  \REPEAT \label{re}
   \FOR{$1 \leqslant  i \leqslant  3$}
   \STATE{Randomly choose $\zv_i$ in $\Cpub$}
   \ENDFOR
   \STATE{ $\code{B} \leftarrow ~ < \big\{ \zv_i \cwp \gv_j ~|~ 1 \leqslant i \leqslant 3 \text{~and~} 1 \leqslant j
  \leqslant k \big \} >$}
  \UNTIL{$\dim(\code{B}) \leqslant  2k+2$ and $\dim
    \left(<\zv_1,\zv_2,\zv_3> \right) = 3$}
  \STATE{$\mathcal{L} \leftarrow \{\zv_1,\zv_2,\zv_3\}$}
  \STATE{$s \leftarrow 4$}
  \WHILE{$s \leqslant  k-1$}
    \REPEAT
  \STATE{Randomly choose $\zv_s$ in $\Cpub$}   
  \STATE{$\code{T} \leftarrow ~ < \big\{ \zv_i \cwp \gv_j ~|~ i \in \{1,2,s\} \text{~and~} 1 \leqslant j
  \leqslant k \big \} >$}
  \UNTIL{$\dim(\code{T}) \leqslant  2k+2$ \AND $\dim \left(< \mathcal{L} \cup \left\{
      \zv_s \right \} > \right) = s $}
  \STATE{$\mathcal{L} \leftarrow \mathcal{L} \cup \{\zv_s\}$}
  \STATE{$s \leftarrow s+1$}
  \ENDWHILE
    \RETURN{$\mathcal{L}$;}
  \end{algorithmic}
  \caption{\label{algo:Clambdaperp}Recovering $\CC_{\lambda^\perp}$.}
\end{algorithm}
The first phase of the attack, namely finding a suitable triple $\zv_1,\zv_2,\zv_3$ runs 
in expected time $O\left( q^3 k^2 n \right)$
because 
each test in the \textbf{repeat} loop \ref{re} has a chance of
$\frac{1}{q^3}$ to succeed. Indeed, 
$\CC_{\lambdav^\perp}$ is of codimension $1$ in $\Cpub$ and therefore
a fraction $\frac{1}{q}$ of elements of $\Cpub$ belongs to
$\CC_{\lambdav^\perp}$.
Once $\zv_1,\zv_2,\zv_3$ are found, getting any other element of
$\CC_{\lambdav^\perp}$ is easy. Indeed, take a random element $\zv \in \Cpub$
and use the same test to check whether the triple $\zv_1, \zv_2, \zv$
is in $\CC_{\lambdav^\perp}$. Since $\zv_1, \zv_2 \in
\CC_{\lambdav^\perp}$ the probability of success is $\frac{1}{q}$ and
hence $\zv$ can be found in $O(q)$ tests. 
The whole algorithm runs in expected time
$O\left( q^3 k^2 n \right)+ O\left( q k^3 n \right)= O\left( q^3 k^2 n \right)$
since $k < n \le q$, hence the first phase of the attack  
is dominant in the complexity.

\subsubsection{Recovering the structure of $\CC$}
\label{subsec:SecondStep}
Once $\CC_{\lambdav^\perp}$ is
recovered, it still remains to recover the secret code and $\av$. 
The problem at hand can be formulated like this: we know a very large subcode,
namely $\CC_{\lambdav^\perp}$,
of a GRS code that we want to recover.
This is exactly the problem which was solved in \cite{Wie10}.
In our case this amounts to compute  $\sqc{\CC_{\lambdav^\perp}}$ which turns out
to be equal to $\GRS{2k-1}{\xv}{\yv\cwp\yv}$ (see \cite{MMP11a,MMP11a:DCC}
for more details). 
It suffices to use the Sidelnikov and Shestakov algorithm
\cite{SidelShesta92} or the algorithm described in Section~\ref{sec:GRS_attack}
to compute a pair $(\xv,\yv \cwp \yv)$ describing $\sqc{\CC_{\lambdav^\perp}}$
as a GRS code.  From this, we deduce a pair $(\xv,\yv)$ defining 
the secret code $\CC$ as a GRS code.

\subsubsection{Deriving $\av$ and $\lambdav$ from $\CC$ and $\CC_{\lambdav^\perp}$}
\label{subsec:ThirdStep}

At this step of the attack let us summarize what has been done.
We have been able to compute the codes $\CC$ and $\Clp$
defined in~(\ref{eq:CC}) and~(\ref{eq:Clp}) respectively.
We recall the inclusion diagram.
$$
\xymatrix{\relax  & \Cpub + \CC \ar@{-}[ld]_{\textrm{Codim} 1} \ar@{-}[rd]^{\textrm{Codim} 1}& \\ 
  \Cpub  \ar@{-}[rd]_{\textrm{Codim} 1} & &
  \CC \ar@{-}[ld]^{\textrm{Codim} 1} \\
             & \CC_{\lambdav^\perp} & }
$$
In addition, we know that the code $\CC$ and $\Cpub$
are related by the map
\begin{equation}
  \label{eq:mappsi}
\psi_{ \av, \lambdav } :
\left\{
  \begin{array}{rcl}
    \CC & \rightarrow & \Cpub\\
    \pv & \mapsto     & \pv + (\scp{\pv}{\lambdav})\av 
  \end{array}
\right. .
\end{equation}

To finish the attack, we need to find a pair
$(\av_0, \lambdav_0)\in \fq^n \times \fq^n$
such that the map $\psi_{ \av_0 , \lambdav_0 }$ induces an isomorphism
from $\CC$ to $\Cpub$. This motivates the following definition.
\begin{defn}
  \label{defn:valid}
  A pair $(\av_0, \lambdav_0)\in \fq^n \times \fq^n$
  is said to be a \emph{valid pair} if
  \begin{enumerate}[(a)]
  \item\label{item:scpal} $\scp{\av_0}{\lambdav_0} \neq -1$;
  \item\label{item:psiCCincCpub} $\psi_{\av_0, \lambdav_0} (\CC) \subseteq \Cpub$.
  \end{enumerate}
\end{defn}

\begin{rem}
  \label{rem:inciseq}
  From Corollary~\ref{cor:inverse} (Appendix~\ref{appendixBaldi}),
  Condition~(\ref{item:scpal})
  asserts that $\psi_{\av_0, \lambdav_0}$
  is an isomorphism.
  Thus, 
  $$
  \forall \pv \in \Cpub,\ \exists \pv' \in \CC,\ {\rm such}\ {\rm that}\
  \pv = \pv' + (\scp{\pv'}{\lambdav}_0)\av_0.
  $$
  Moreover, if~(\ref{item:scpal}) holds then
  the inclusion in~(\ref{item:psiCCincCpub}) is an equality
  since both codes have the same dimension.
\end{rem}

\bigskip

First, we choose $\uv \in \CC\setminus \Clp$ and
$\vv \in \Cpub \setminus \Clp$.
Since $\Clp$ has codimension $1$ in $\CC$, we have
\begin{equation}
  \label{eq:decompCC}
  \CC = \Clp \oplus <\uv> \quad \textrm{and}\quad \Cpub = \Clp \oplus <\vv>.
\end{equation}

\noindent A valid pair $(\av_0, \lambdav_0)$ can be found easily
using the two following elementary lemmas.

\begin{lem}
  \label{lem:lv0.u}
  For all $\lambdav_0 \in \Clpp \setminus (\CC^{\perp} \cup \Cpub^{\perp})$,
  we have 
  $$
  \scp{\lambdav_0}{\uv} \neq 0 \quad \textrm{and} \quad
  \scp{\lambdav_0}{\vv}\neq 0.
  $$
\end{lem}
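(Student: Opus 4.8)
The plan is to reduce everything to the two direct-sum decompositions
$\CC = \Clp \oplus <\uv>$ and $\Cpub = \Clp \oplus <\vv>$ recorded in~\eqref{eq:decompCC}, together with the defining property of $\Clpp$: a vector lies in $\Clpp$ precisely when it is orthogonal to every codeword of $\Clp$. Since $\lambdav_0 \in \Clpp$ by hypothesis, the only missing information to conclude orthogonality with all of $\CC$ (resp.\ all of $\Cpub$) is the pairing of $\lambdav_0$ with the extra generator $\uv$ (resp.\ $\vv$).

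First I would prove $\scp{\lambdav_0}{\uv}\neq 0$ by contradiction. Assume $\scp{\lambdav_0}{\uv}=0$. Every $\cv\in\CC$ can be written $\cv = \dv + \mu\uv$ with $\dv\in\Clp$ and $\mu\in\fq$; then by bilinearity $\scp{\lambdav_0}{\cv} = \scp{\lambdav_0}{\dv} + \mu\,\scp{\lambdav_0}{\uv} = 0$, the first term vanishing because $\lambdav_0\in\Clpp$ and the second by assumption. Hence $\lambdav_0\in\CC^{\perp}$, contradicting $\lambdav_0\notin\CC^{\perp}$. Replacing $\CC$ by $\Cpub$, $\Clp$ is also of codimension $1$ in $\Cpub$, so the identical argument with $\vv$ in place of $\uv$ shows that $\scp{\lambdav_0}{\vv}=0$ would force $\lambdav_0\in\Cpub^{\perp}$, again excluded by hypothesis. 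This yields both inequalities.

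There is essentially no obstacle here: the statement is a piece of linear-algebra bookkeeping. The only point that needs a prior fact is that $\Clp$ genuinely has codimension $1$ in both $\CC$ and $\Cpub$, so that the decompositions in~\eqref{eq:decompCC} are valid and the complementary lines $<\uv>$, $<\vv>$ are one-dimensional; this is exactly what Assumption~\eqref{eq:assumption} guarantees and what is displayed in the inclusion diagram~\eqref{eq:inc_diagram}. Granting that, the proof is immediate and requires no computation.
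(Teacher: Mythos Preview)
Your proof is correct and is essentially the same argument as the paper's: both proceed by contradiction, using that $\lambdav_0 \in \Clpp$ together with $\scp{\lambdav_0}{\uv}=0$ forces $\lambdav_0$ to be orthogonal to $\Clp + <\uv> = \CC$, and symmetrically for $\vv$ and $\Cpub$. The paper phrases this via the identity $\Clpp \cap <\uv>^{\perp} = (\Clp + <\uv>)^{\perp}$ while you spell it out element-wise, but the content is identical.
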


\begin{proof}
  Assume that $\scp{\lambdav_0}{\uv} = 0$. Then, 
  $\lambdav_0 \in \Clpp \cap <\uv>^{\perp} = (\Clp + <\uv>)^{\perp}$.
  Hence, from~(\ref{eq:decompCC}), we would have $\lambdav_0 \in \CC^{\perp}$
  which yields a contradiction.
  The other non-equality is proved by the very same manner.
\end{proof}

\begin{lem}
  \label{lem:psiuinCpub}
  For all $\lambdav_0 \in \Clpp$ and for all
  $\xv \in \fq^n$, we have
  $$
  \psi_{\lambdav_0, \xv} (\CC) \subset \Cpub \ \Longleftrightarrow 
  \psi_{\lambdav_0, \xv} (\uv) \in \Cpub.
  $$
\end{lem}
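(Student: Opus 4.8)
The plan is to combine the direct-sum decomposition $\CC = \Clp \oplus \langle\uv\rangle$ recorded in~\eqref{eq:decompCC} with the $\fq$-linearity of the map $\psi\colon \pv \mapsto \pv + (\scp{\pv}{\lambdav_0})\xv$ appearing in~\eqref{eq:mappsi}. The forward implication is immediate: if $\psi(\CC)\subseteq\Cpub$ then, since $\uv\in\CC$, we get $\psi(\uv)\in\Cpub$. All the content is in the converse.

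For the converse, the first step is to note that $\psi$ restricts to the identity on $\Clp$. Indeed $\lambdav_0\in\Clpp=\Clp^{\perp}$, so every $\pv'\in\Clp$ satisfies $\scp{\pv'}{\lambdav_0}=0$ and hence $\psi(\pv')=\pv'$. Moreover, under Assumption~\eqref{eq:assumption} the code $\Clp$ has codimension $1$ in $\Cpub$ (this is exactly the content of the inclusion diagram~\eqref{eq:inc_diagram}), so in particular $\Clp\subseteq\Cpub$, and therefore $\psi(\Clp)=\Clp\subseteq\Cpub$.

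The second step is to spread this over all of $\CC$ using linearity. Take an arbitrary $\pv\in\CC$ and write it, via~\eqref{eq:decompCC}, as $\pv=\pv'+\mu\uv$ with $\pv'\in\Clp$ and $\mu\in\fq$. Since $\psi$ is $\fq$-linear,
\[
\psi(\pv)=\psi(\pv')+\mu\,\psi(\uv)=\pv'+\mu\,\psi(\uv).
\]
The first summand lies in $\Clp\subseteq\Cpub$, and by the hypothesis $\psi(\uv)\in\Cpub$ the second summand lies in $\Cpub$ as well; as $\Cpub$ is a linear subspace, $\psi(\pv)\in\Cpub$. Since $\pv$ was arbitrary, $\psi(\CC)\subseteq\Cpub$, as desired.

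I do not expect a genuine obstacle here; the argument is elementary. The only two points one must not skip are: (i) that the membership $\lambdav_0\in\Clpp$ is precisely the statement that $\lambdav_0$ annihilates $\Clp$, which is what makes $\psi$ fix $\Clp$ pointwise; and (ii) that $\Clp$ is actually contained in $\Cpub$, which relies on Assumption~\eqref{eq:assumption} together with the codimension-$1$ inclusions of~\eqref{eq:inc_diagram}. Once these are in place, the direct-sum decomposition and linearity close the proof at once.
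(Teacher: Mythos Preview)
Your proof is correct and follows essentially the same route as the paper: both use that $\lambdav_0\in\Clpp$ forces $\psi$ to fix $\Clp$ pointwise, whence $\psi(\Clp)=\Clp\subset\Cpub$, and then the decomposition $\CC=\Clp\oplus\langle\uv\rangle$ together with linearity reduces everything to the single condition $\psi(\uv)\in\Cpub$. One small remark: the inclusion $\Clp\subset\Cpub$ does not actually require Assumption~\eqref{eq:assumption} (it follows directly from Lemma~\ref{lem:structure}, since $\pv\in\Clp$ gives $\scp{\pv}{\lambdav}=0$ and hence $\pv=\pv+(\scp{\pv}{\lambdav})\av\in\Cpub$); the assumption is only used to ensure the codimension is exactly~$1$.
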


\begin{proof}
  Since $\uv \in \CC$, the implication $(\Longrightarrow)$ is obvious.
  Conversely, assume that $\psi_{\lambdav_0, \xv} (\uv) \in \Cpub$.
  Then, from~(\ref{eq:decompCC}), to show the result there remains
  to show that $\psi_{\lambdav_0 ,  \xv} (\Clp) \subset \Cpub$.
  But, since $\lambdav_0 \in \Clpp$, then for all
  $\pv \in \Clp$, we have
  $$\psi_{\lambdav_0 ,  \xv} (\pv)= \pv + (\scp{\lambdav_0}{\pv})
  \xv = \pv.$$
  Thus, $\psi_{\lambdav_0 ,  \xv} (\Clp) = \Clp \subset \Cpub$.
\end{proof}

\paragraph{Procedure to recover a valid pair.}
Before starting, recall that we fixed vectors $\uv \in \CC \setminus \Clp$
and $\vv \in \Cpub \setminus \Clp$
so that~(\ref{eq:decompCC}) holds.
\begin{enumerate}[{\em Step} 1.]
\item Choose $\lambdav_0 \in \Clpp \setminus (\CC^{\perp} \cup \Cpub^{\perp})$
  at random.
  Notice that the set $\Clpp \setminus (\CC^{\perp} \cup \Cpub^{\perp})$
  is nonempty since both $\CC^{\perp}$ and $\Cpub^{\perp}$ have codimension $1$
  in $\Clpp$ and even over a finite field, no vector space of dimension $\geq 1$
  is a union of two vector subspaces of codimension $1$.
\item Set
  $$
  \av_0:= \frac{1}{\scp{\lambdav_0}{\uv}} \left(\vv - \uv \right).
  $$
  It is well--defined thanks to Lemma~\ref{lem:lv0.u}.
\end{enumerate}
  We claim that the pair $(\av_0, \lambdav_0)$ is valid.
  Indeed, we have
  $$
    \scp{\av_0}{\lambdav_0} =
    \frac{\scp{\lambdav_0}{\vv}}{\scp{\lambdav_0}{\uv}} - 1.
  $$
  Moreover, $\scp{\lambdav_0}{\vv}\neq 0$ thanks
  to Lemma~\ref{lem:lv0.u}, and hence $\scp{\av_0}{\lambdav_0}\neq -1$.
  Thus, the pair satisfies Condition~(\ref{item:scpal}) of
  Definition~\ref{defn:valid}.

  To show that Condition~(\ref{item:psiCCincCpub}) is satisfied too,
  Lemma~\ref{lem:psiuinCpub} asserts that we only need to
  prove that $\psi_{\av_0, \lambdav_0}(\uv) \in \Cpub$ which is true since 
  an elementary computation yields
  $$
  \psi_{\av_0, \lambdav_0}(\uv) = \vv
  $$
  which is in $\Cpub$ by construction.

\subsubsection{Decryption of any ciphertext}
\label{subsec:FinalStep}
We have found a valid pair (Definition~\ref{defn:valid})
$(\av_0,\lambdav_0)$.
We want to decode the vector $\zv \eqdef \cv +\ev$ where $\ev$ is an error of a certain Hamming weight which 
can be corrected by the decoding algorithm chosen for $\CC$ and $\cv$ is an element of the public code. From Remark~\ref{rem:inciseq}
page~\pageref{rem:inciseq},
we know that there exists 
$\pv$ in $\CC$ such that 
\begin{equation}
\label{eq:cv_pv}
\cv = \pv + (\scp{\lambdav_0}{\pv}) \av_0.
\end{equation}
We compute $\zv(\alpha) \eqdef \zv + \alpha \av_0$ for all elements 
$\alpha$ in $\fq$. One of these elements $\alpha$ is equal to $-\scp{\lambdav_0}{\pv}$  and 
we obtain $\zv(\alpha) = \pv + \ev$ in this case. Decoding $\zv(\alpha)$ in $\CC$ will reveal $\pv$ and
this gives $\cv$ by using Equation~\eqref{eq:cv_pv}.

\subsection{Extending  the attack  for rates larger than $\frac{1}{2}$}
\label{subsec:dual_for_high_rates}

The codes suggested in \cite[\S5.1.1,\S5.1.2]{BBCRS11a} are all of rate significantly larger than $\frac{1}{2}$,
for instance Example 1 p.15 suggests a GRS code of length $306$, dimension $232$ over $\ff{307}$, 
whereas Example 2. p.15 suggests a GRS code of length $511$, dimension $387$ over $\ff{512}$.
The attack suggested in the previous subsection only applies to rates smaller than $\frac{1}{2}$.
There is a simple way to adapt the previous attack for this case by
considering the dual  $\Cpub^\perp$ of the public code. Note that by
Proposition \ref{pr:dual},  there exists $\yv'$ in $\fq^n$ for which
we have $\displaystyle \CC^\perp = \GRS{n-k}{\xv}{\yv'}$. Moreover,
$\Cpub^\perp$ displays a similar structure as $\Cpub$.

\begin{lem}\label{lem:structure_dual}
For any $\cv$ from $\Cpub^\perp$ there exists an element $\pv$ in $\CC^\perp$ such that:
\begin{equation}
\label{eq:structure_dual}
\cv = \pv + (\scp{\pv}{\av}) \bv.
\end{equation}
\end{lem}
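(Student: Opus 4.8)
The plan is to mirror the proof of Lemma~\ref{lem:structure} (carried out in Appendix~\ref{appendixBaldi}), but applied to the dual codes. Recall that $\Cpub = \Sm^{-1}\Gms\Qm^{-1}$ with $\Qm = \Pim + \Rm$, so a parity-check matrix of $\Cpub$ is obtained by transposing the inverse transformation: if $\Hm$ is a parity-check matrix of $\Csec$, then $\Hm (\Qm^{-1})^T = \Hm (\Qm^T)^{-1}$ is a parity-check matrix of $\Cpub$, i.e. $\Cpub^\perp$ is the image of $\Csec^\perp$ under right multiplication by $(\Qm^T)^{-1} = (\Qm^{-1})^T$. Composing with the permutation $\Pim$ as in~(\ref{eq:CC}), one gets that $\Cpub^\perp$ is obtained from $\CC^\perp$ by a transformation which is, up to the permutation already absorbed into $\CC^\perp$, governed by the transpose of $\Qm^{-1}$.

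First I would make the rank-one structure explicit. We have $\Rm = \alphav^T\betav$, and~(\ref{eq:RPi-1}) reads $\Rm\Pim^{-1} = \bv^T\av$, so transposing gives a rank-one matrix $\av^T\bv$ sitting inside the transpose transformation; the roles of $\av$ and $\bv$ simply swap. Concretely, writing the inverse of $\Qm^T$ via the Sherman--Morrison formula exactly as is done for $\Qm$ in Appendix~\ref{appendixBaldi} (this is where Lemmas~\ref{lem:PQinv} and~\ref{lem:inverse} get re-used, now for $\Qm^T$), one obtains that every $\cv \in \Cpub^\perp$ can be written as $\pv + (\scp{\pv}{\av})\bv$ for some $\pv \in \CC^\perp$, where the scalar factor $-\tfrac{1}{1+\scp{\av}{\bv}}$ has been folded into $\bv$ in the same way $\lambdav$ was defined from $\bv$ in Lemma~\ref{lem:structure}. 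The point is that the denominator $1 + \scp{\av}{\bv}$ is exactly the same quantity that appeared before, and it is nonzero precisely because $\Qm$ (equivalently $\Qm^T$) is invertible, so no new nondegeneracy hypothesis is needed.

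So the key steps, in order, are: (i) compute a parity-check description of $\Cpub$ in terms of $(\Qm^T)^{-1}$ and reduce, as in~(\ref{eq:CC}), to showing the relation between $\CC^\perp$ and $\Cpub^\perp$; (ii) transpose the rank-one identity~(\ref{eq:RPi-1}) to see that $\alphav,\betav$ swap roles, yielding $\av^T\bv$ in place of $\bv^T\av$; (iii) invert $\Im + \av^T\bv$ (times the permutation) by Sherman--Morrison, reusing the computations of Appendix~\ref{appendixBaldi} verbatim with $\av \leftrightarrow \bv$; (iv) read off that each $\cv \in \Cpub^\perp$ equals $\pv + (\scp{\pv}{\av})\bv$ with $\pv \in \CC^\perp$. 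The main obstacle — really a bookkeeping hazard rather than a conceptual one — is keeping the transposes straight: one must be careful whether a given rank-one factor appears as a row or a column vector, and whether the permutation $\Pim$ or $\Pim^{-1} = \Pim^T$ is the one being absorbed, so that $\av$ and $\bv$ end up in exactly the claimed positions in~(\ref{eq:structure_dual}) rather than with their roles accidentally interchanged. Once the dictionary ``$\Cpub \leadsto \Cpub^\perp$, $\CC \leadsto \CC^\perp$, $\av \leftrightarrow \bv$, $\lambdav \leadsto$ (rescaled) $\av$'' is fixed, the proof is a line-by-line transposition of the proof of Lemma~\ref{lem:structure}.
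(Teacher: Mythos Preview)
Your overall plan---dualize Lemma~\ref{lem:structure}---is right, but step~(i) contains a genuine error that propagates. If $\Hm$ is a parity-check matrix of $\Csec$, then the parity-check matrix of $\Cpub$ is $\Hm\Qm^T$, not $\Hm(\Qm^{-1})^T$. Indeed, $\Gmp = \Sm^{-1}\Gms\Qm^{-1}$, so $\Gmp(\Hm\Qm^T)^T = \Sm^{-1}\Gms\Qm^{-1}\Qm\Hm^T = 0$, whereas $\Gmp(\Hm(\Qm^{-1})^T)^T = \Sm^{-1}\Gms\Qm^{-2}\Hm^T$, which has no reason to vanish. Equivalently, $\Cpub = \CC\Pm^{-1}$ gives $\Cpub^\perp = \CC^\perp\Pm^T$, not $\CC^\perp(\Pm^T)^{-1}$.

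This is precisely the bookkeeping hazard you flagged, and it matters: once you use the correct matrix $\Pm^T = (\Im + \bv^T\av)^T = \Im + \av^T\bv$, no inversion is needed at all. For $\pv\in\CC^\perp$ one has $\pv\Pm^T = \pv + (\scp{\pv}{\av})\bv$ directly, which is exactly~(\ref{eq:structure_dual}) with the \emph{given} vector $\bv$. The paper's proof is just this two-line computation. Your Sherman--Morrison step and the claim that the factor $-\tfrac{1}{1+\scp{\av}{\bv}}$ gets ``folded into $\bv$'' are artifacts of the inverted transformation; note that $\bv$ is the specific vector fixed by~(\ref{eq:RPi-1}), not a free parameter you may rescale, and the lemma as stated has $\bv$ appearing bare. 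The asymmetry with Lemma~\ref{lem:structure} (where $\lambdav$ carries a scalar) is real: the primal identity goes through $\Pm^{-1}$, the dual through $\Pm^T$, and only the former requires an inversion.
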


\begin{proof}
The key to Lemma \ref{lem:structure_dual} is the fact that,
from~(\ref{eq:Cpub=CsecPim1}), we have $\Cpub^{\perp} =
\CC^\perp \Pm^T$. Indeed $\Cpub = \CC\Pm^{-1}$ and therefore for any element
$\cv$ of $\Cpub$ there exists an  element $\pv$ of $\CC$ such that 
$\cv = \pv \Pm^{-1}$. Observe now that every element $\cv'$ in $\Cpub^\perp$
satisfies
$$
0=\scp{\cv}{\cv'} = \scp{\pv \Pm^{-1}}{\cv'}.
$$
If we set $\cv' = \pv' \Pm^{T}$ it results $\scp{\pv}{\pv'} = 0$, therefore
$\Cpub^\perp = \CC^\perp \Pm^T$.
This discussion implies that
there exists an element $\pv'$ in $\CC^\perp$ such that:
$$
\cv'  =  \pv' \Pm^T
 =  \pv'  \left(\Im +  \bv^T \av \right)^T
 =  \pv'  + \pv' \av^T \bv 
 =  \pv' + (\scp{\pv' }{\av}) \bv.
$$
\end{proof}

It implies that the whole approach of the previous subsection can be 
carried out over $\Cpub^\perp$. It allows to recover the secret code $\CC^\perp$ and therefore also $\CC$. This attack needs
that $2(n-k)+2 < n$, that is $2k > n+2$. In summary, there is an attack as soon as $k$ is outside a narrow interval around
$n/2$ which is $[\frac{n-2}{2},\frac{n+2}{2}]$.

\section{McEliece Variants Based on GRS codes}
  \label{sec:GRS_attack}

In this section, we will give an alternative attack of \cite{SidelShesta92} against any McEliece-like cryptosystem based on GRS codes. 
This attack runs in polynomial time and makes possible the recovery of the
structure of any GRS code. From the computational point of view,
this attack is less efficient than that of Sidelnikov and Shestakov because
of the cost of the computation of squares or star products of
codes. 
Indeed,  the complexity of the Sidelnikov-Shestakov attack is $O\left(k^3+k^2n\right)$ whereas our attack runs in
 $O(k^2n^3+k^3n^2)$ operations. 
However, our approach remains of interest since it does not require as a
first step the computation of minimum weight codewords.
For this reason, it could provide interesting generalizations.
Indeed, it should be noticed that certain key recovery attacks
on other algebraic codes such as
\cite{MinderShokrollahi07} on Reed--Muller codes and \cite{FM08}
on hyperelliptic algebraic geometry codes are built in the same spirit as 
Sidelnikov and Shestakov's attack \cite{SidelShesta92} and in particular have as a first
step, the computation of minimum weight codewords.
This computation is subexponential for Reed--Muller codes and
exponential in the genus of the curve for algebraic geometry codes, which
limits the attack \cite{FM08} to codes from curves with very low genus.
On the other hand, our method might be generalized to such codes
and provides alternative and more computationally efficient attacks.

\subsection{Context and notation}
 Let $\CC$ be a $q$-ary GRS code $\CC \eqdef \GRS{k}{\av}{\bv} \subset \fq^n$.
 Assume that it has dimension $k\leq n/2$
 (if not, then one can work with the dual code). 
First assume that the two first positions, \textit{i.e.} the two first
entries of $\av$ are $0$ and $1$.
Such an assumption makes sense since every GRS code
is permutation equivalent to a code satisfying this condition.
This is a consequence of the $3$--transitivity
of the action of the projective linear group $\mathbf{PGL} (2, \fq)$
on the points of the projective line.

\begin{nota}
For all $i,j$ such that $i>0$, $j>0$ and $i+j\leq k-1$, we denote by
$\CC (i,j)$ the subcode of $\CC$ given by the evaluation of
polynomials vanishing at $0$ (\textit{i.e.} the first position by assumption) with
multiplicity at least $i$ and at $1$ (\textit{i.e.} the second position)
 with multiplicity at least $j$, \textit{i.e.} multiples of $x^i(x-1)^j$.   
For convenience sake, we set $\CC (0,0) \eqdef \CC$.
\end{nota}

The main step of our attack is to compute some codes among
$\CC(i,j)$.
Notice that these codes are also GRS codes.

\subsection{Computing some subcodes}
Clearly, the computation of a generator matrix of $\CC (0,1), \CC (1,0)$ and
$\CC(1,1)$ is straightforward since it reduces to Gaussian elimination.
These codes are nothing but shortenings of $\CC$.
The main tool of our attack is the following result.

\begin{thm}
  Assume that $k\leq n/2$.
  For all $1\leq i \leq k-2$ and all $j$ such that $i+j\leq k-2$, we have
$$
  \CC (i+1, j) \cwp \CC (i-1 , j) = \CC (i,j)^2 \quad \textrm{and} \quad
  \CC (i, j+1) \cwp \CC (i , j-1) = \CC (i,j)^2.
$$
\end{thm}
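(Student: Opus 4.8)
The plan is to work with the evaluation description of GRS codes. Writing $\CC = \GRS{k}{\av}{\bv}$ with $a_1 = 0$, $a_2 = 1$, the subcode $\CC(i,j)$ consists of all vectors $(b_1 f(a_1), \dots, b_n f(a_n))$ where $f$ ranges over polynomials of degree $< k$ that are divisible by $x^i(x-1)^j$. Equivalently, $f = x^i(x-1)^j g$ with $\deg g < k - i - j$, so $\CC(i,j) = \GRS{k-i-j}{\av}{\bv \cwp \word{w}_{i,j}}$ where $\word{w}_{i,j}$ is the vector with entries $a_\ell^i (a_\ell - 1)^j$. In particular $\CC(i,j)$ is a GRS code of dimension $k - i - j$ with the \emph{same} evaluation points $\av$. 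I will prove only the first identity $\CC(i+1,j) \cwp \CC(i-1,j) = \CC(i,j)^2$; the second follows by the symmetric argument with the roles of $x$ and $x-1$ interchanged (or, more cleanly, by applying the first identity to the GRS code $\CC(0,j)$ with the point $1$ playing the role of $0$).

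First I would compute the left-hand side. A generator of $\CC(i+1,j) \cwp \CC(i-1,j)$ is a star product of $(b_\ell\, a_\ell^{i+1}(a_\ell-1)^j g(a_\ell))_\ell$ and $(b_\ell\, a_\ell^{i-1}(a_\ell-1)^j h(a_\ell))_\ell$ with $\deg g < k-i-j-1$ and $\deg h < k-i-j+1$; this equals $(b_\ell^2\, a_\ell^{2i}(a_\ell-1)^{2j}\, g(a_\ell)h(a_\ell))_\ell$. Hence $\CC(i+1,j)\cwp\CC(i-1,j)$ is spanned by evaluations of polynomials of the form $x^{2i}(x-1)^{2j} r(x)$ with $r = gh$; as $g,h$ range over their degree bounds, the products $gh$ span the space of polynomials of degree $\le (k-i-j-2) + (k-i-j) = 2(k-i-j)-2 = 2(k-i-j-1)$, so they span \emph{all} polynomials of degree $< 2(k-i-j)-1$. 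Thus the left-hand side equals $\GRS{2(k-i-j)-1}{\av}{\bv^2 \cwp \word{w}_{i,j}^2}$. On the other side, $\CC(i,j) = \GRS{k-i-j}{\av}{\bv\cwp\word{w}_{i,j}}$ has dimension $k-i-j$, and by Proposition~\ref{prop:square} (applicable precisely when $2(k-i-j)-1 \le n$), $\CC(i,j)^2 = \GRS{2(k-i-j)-1}{\av}{\bv^2\cwp\word{w}_{i,j}^2}$. The two sides coincide.

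The step to watch is the dimension/degree bookkeeping: I must verify that the GRS parameter $2(k-i-j)-1$ never exceeds $n$ so that Proposition~\ref{prop:square} applies and no "wrap-around" collapse occurs. Under the hypotheses $k \le n/2$ and $i \ge 1$, $j \ge 0$ (and $i+j \le k-2$), we have $2(k-i-j)-1 \le 2k - 3 \le n - 3 < n$, so the hypothesis of Proposition~\ref{prop:square} is comfortably met for every $\CC(i,j)$ appearing, including the larger code $\CC(i-1,j)$ on the left (its dimension is $k-i-j+1 \le k-1 \le n/2$). The only genuine content beyond this is the elementary fact that $\{gh : \deg g < d-1, \deg h < d+1\}$ spans all polynomials of degree $\le 2d-2$; this is immediate since one may take $h = 1$ and $g$ arbitrary of degree $< d-1$ to get everything of degree $\le d-2$, then $g = x^{d-2}$ with $h$ of degree $\le d$ to climb up to degree $2d-2$, and linear combinations fill the gap — or simply note the span contains $x^a \cdot x^b$ for all admissible $a < d-1$, $b < d+1$, whose exponents cover $\{0, 1, \dots, 2d-2\}$.
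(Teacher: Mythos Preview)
Your proof is correct and takes essentially the same approach as the paper, which works directly with the polynomial spaces $V_{i,j} \eqdef x^i(x-1)^j\,\fq[x]_{<k-i-j}$ and verifies $V_{i-1,j}\cdot V_{i+1,j} = V_{i,j}^2 = x^{2i}(x-1)^{2j}\,\fq[x]_{<2(k-i-j)-1}$ before pushing through the evaluation map. One minor quibble: your identification $\CC(i,j) = \GRS{k-i-j}{\av}{\bv\cwp\word{w}_{i,j}}$ technically falls outside Definition~\ref{defGRS} since $\word{w}_{i,j}$ has a zero entry at position~$1$ (and at position~$2$ when $j\geq 1$), so citing Proposition~\ref{prop:square} verbatim is formally off---but your direct evaluation computation already bypasses this and stands on its own.
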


\begin{proof}
  We prove the first identity, the second is obtained easily by symmetry.
  For all pair of nonnegative integers $(i,j)$, set
    $$
        V_{i,j}   \eqdef   x^i (x-1)^j \fq [x]_{<k-i-j}
    $$
  This space has dimension $k-i-j$
  and is related to our GRS codes by
  $$
        \CC(i,j)       =  < \bv \cwp P(\av) \ |\ P\in V_{i,j}>,
  $$
  where for all $P\in \fq[x]$, we denote by $P(\av)$ the word
$P(\av)\eqdef (P(a_1), \ldots , P(a_n))$. Clearly, we have:
  $$
    V_{i,j}^2  = x^{2i}(x-1)^{2j} \fq[x]_{<2k-2i-2j-1}
  $$
  and it is also readily  checked that
  $$
   V_{i-1, j} \cwp V_{i+1,j}  = x^{2i}(x-1)^{2j} \fq[x]_{<2k-2i-2j-1}.
  $$
  This yields the result.
\end{proof}

From the previous result, as long as $\CC (i,j)^2 \neq \fq^n$,
which holds for $k\leq n/2$, given generator matrices of
$\CC(i,j)$ and $\CC(i-1, j)$, one can recover a basis of
$\CC (i+1, j)$ by solving a simple linear system.
Indeed, deciding whether an element $\cv \in \CC(i,j)$ is actually
in $\CC(i+1, j)$ reduces to solve:
\begin{equation}
  \cv \cwp \CC(i-1, j) \subseteq \CC (i, j)^2. \label{eq:filtration}
\end{equation}
It is worthwhile noting that \eqref{eq:filtration} is not satisfied for a $\cv \in \CC(i,j)$ that does not belong to $\CC(i+1, j)$.

\paragraph{Complexity.}
To solve \eqref{eq:filtration}, we first need to compute a row-echelon
basis for $\CC(i,j)^2$.
From Proposition~\ref{prop:complexity}, this costs $O(k^2n^2)$.
From this basis, we compute easily a basis for
${\left(\CC(i,j)^2\right)}^{\bot}$.
The equations of the linear system~(\ref{eq:filtration}) have the form $\hv \cwp \dv$ where 
$\hv \in \CC(i-1,j)$ and $\dv \in (\CC(i,j)^2)^{\bot}$.
Thus, solving the system consists in computing all these equations whose number 
is $(\dim \CC(i-1,j))(n-\dim \CC(i,j)^2)$. Hence their computation costs
$O(kn(n-2k))$, then we solve a linear system which costs $O(n^2k(n-2k))$,
or roughly speaking $O(n^3k)$.
Therefore, the complexity of solving (\ref{eq:filtration}) is
$O(k^2n^2+k^3n)$.
This computation should be iterated $k$ times, which yields 
$O(k^2n^3+k^3n^2)$ operations. 

\subsection{Description of the attack}
The attack summarizes as follows. We assume that the dimension of the GRS
code is less than $n/2$, if not one can apply the attack on its dual.

\begin{enumerate}[{\it Step 1.}]
  \item Compute a basis of $\CC (k-1 , 0)$, \textit{i.e.} compute a nonzero vector $\cv$
    of this $1$--dimensional space.
    The corresponding vector comes from the evaluation of
    a polynomial of the form $\lambda x^{k-1}$ for some
    $\lambda \in \fq^{\times}$.
    More precisely, we get the vector $\lambda (\av^{ k-1} \cwp \bv)$.
    Then, compute a basis of $\CC (k-2, 1)$. The corresponding vector
    $\cv'$ is of the form $\mu \av^{k-2} \cwp (\av - \word{1}) \cwp \bv $ for $\mu \in \fq^{\times}$ and where $\word{1} \eqdef (1, \ldots , 1)$.

  \item The vectors $\cv$ and $\cv'$ have no zero position but the
    two first ones. Thus, after puncturing at the two first positions
    the quotient $\cv'/\cv$ makes sense and corresponds to the evaluation of the
    fraction $\nu (x-1)/x$ for some $\nu \in \fq^{\times}$
    (\textit{i.e.} is $\nu (\av - \word{1})/\av$, which makes sense after a suitable
    puncturing).

    It is worth noting that compared to the vectors $\cv$ and $\cv'$,
    the vector $\cv'/\cv$ corresponds to the exact evaluation of $\nu (x-1)/x$ at 
    some elements of $\fq \setminus \{0,1\}$ since 
    the entries of $\bv$ are cancelled by the quotient.

  \item Up to now, we only made two arbitrary choices by fixing the position 
    of $0$ and $1$. Because of the $3$--transitivity of $\mathbf{PGL}(2, \fq)$,
    one can make a third arbitrary choice. 
    Thus, without loss of generality, one can assume that $\nu =1$.
    Now, notice that the map $x \mapsto (x-1)/x$ is a bijection from
    $\fq \setminus \{0,1\}$ to itself with reciprocal map $y\mapsto 1/(1-y)$.

    Thus, by applying the map $y\mapsto 1/(1-y)$ to the entries of the vector
    $\cv'/\cv$ we get the corresponding positions, \textit{i.e.} the vector $\av$.

  \item
    Now, comparing the vector $c$ with the vector
    $\av^{k}$,
    we get $\bv$ up to 
    multiplication by an element $\alpha \in \fq^{\times}$, which does not matter
    since $\GRS{k}{\av}{\bv} = \GRS{k}{\av}{\alpha \bv}$ for
    $\alpha \in \fq^{\times}$. 
\end{enumerate}

\begin{rem}
  Roughly speaking, this attack can be regarded as a ``local version''
of Sidelnikov and Shestakov's attack. Indeed, Sidelnikov and Shestakov's
attack consist
in finding two codewords of minimum weight whose support differ only in two positions.
 This corresponds to shorten the code at $k-2$ positions and then
recover the structure of the code using two codewords of this shortened code.
Here, we shorten only in a single  position but consider polynomials vanishing with
a high multiplicity.
\end{rem}

\section*{Conclusion} 

In this paper we use directly the fact that the square of codes which are close enough to GRS codes 
have an abnormally small dimension. When applied to several public-key encryption schemes \cite{Nie86,Wie06,BBCRS11a,BL12},  it always results in an efficient key-recovery attack. More precisely, we show that:

\begin{itemize}
\item  Computing the dimensions of the square of various subcodes of the public code permits to detect random columns in the generator
matrix of the public code of the Wieschebrink cryptosystem \cite{Wie06},

\item  Computing the dimensions of the square of various punctured versions of the public code in the Bogdanov-Lee cryptosystem \cite{BL12} enables 
to retrieve the Reed-Solomon part of the public code,

\item In the case of the scheme \cite{BBCRS11a}, it is possible to identify a certain subcode that is both included 
in a GRS code and the public code,

\item In the case of a McEliece-like cryptosystem based on a GRS code \cite{Nie86},
it enables to get a full filtration by means of GRS subcodes, so that 
the structure of the public code as a GRS code is recovered.
\end{itemize}

It should be mentioned that the idea of using product codes and a suitable filtration was  used  recently in \cite{COT:EC14} 
to cryptanalyze successfully in polynomial time the wild McEliece cryptosystems proposed in \cite{BLP10:wild} that were 
 defined over a quadratic extension.

\medskip

 Note that the component-wise product of codes which is central to our approach has been applied recently in \cite{CB13a} to attack the McEliece variant 
 based on Reed-Muller codes proposed in \cite{Sid94}. The squares of these codes have also an abnormal dimension in this case.
 This yields in some cases a polynomial time attack \cite{CB13a} and in general it improves  upon the subexponential attack of \cite{MinderShokrollahi07}. 
 It would be interesting to study whether an attack similar to our filtration attack which was effective against GRS codes could be carried  out for Reed-Muller codes  to yield 
 a polynomial time attack on all instances of this cryptosystem.  However, the most 
challenging task would be to attack the original McEliece cryptosystem with similar tools (at least for a range of parameters) since
duals of Goppa codes also have, in a limited way, square codes with low dimensions.\footnote{See  \cite{MP12a} which contains much more examples of codes with this kind of behavior}

\bibliographystyle{alpha}
\bibliography{crypto}

\cleardoublepage
\newpage

\appendix

\section{Proof of Proposition~\ref{prop:dsg}}
 \label{sec:proof_prop:dsg}
 
Set $a \eqdef |I|-|J|$ and $b\eqdef |I|$.
After a suitable permutation of the support and the indexes of the $x_j$'s,
the code $\CC_I$ has a generator
matrix of the form
$$
\begin{pmatrix}
x_1 & x_2 & \cdots & x_a & x_{a+1} & \cdots & x_b \\
 \vdots & \vdots &    & \vdots & \vdots &        & \vdots \\
x_1^{\ell} & x_2^{\ell} & \cdots & x_a^{\ell} & x_{a+1}^{\ell} & \cdots &
x_b^{\ell}\\
 & & & & & & \\
x_1^{\ell +1} & x_2^{\ell +1} & \cdots & x_a^{\ell +1} &  &  & \\
 \vdots    &   \vdots & & \vdots &  & (0)  & \\
x_1^k & x_2^k & \cdots & x_a^k & &  & 
\end{pmatrix}
$$
We define the maps
$$
\Phi_{I} : \map{\fq [x]}{\fq^b}{P}{(P(x_1), \ldots , P(x_b))}
\quad \text{ and } \quad
\Phi_{I\setminus J} : \map{\fq[x]}{\fq^b}{P}{(P(x_1), \ldots , P(x_a), 0\ldots , 0)}.
$$
We have the two following obvious lemmas.
\begin{lem}
  \label{lem:injection}
Both maps $\Phi_{I}$ and $\Phi_{I\setminus J}$ are linear. In addition,
their restrictions to the vector space $<x^2, \ldots , x^{2k}>$ are injective.
\end{lem}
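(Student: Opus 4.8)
The plan is to handle the two assertions separately; both are elementary, the only mild subtlety being hidden in the injectivity claim. For linearity I would simply note that $\Phi_I$ is the evaluation map $P\mapsto (P(x_1),\ldots,P(x_b))$, which is $\fq$-linear, and that $\Phi_{I\setminus J}$ is this same evaluation map postcomposed with the projection onto the first $a$ coordinates and then zero-padding back to length $b$; a composition of $\fq$-linear maps is $\fq$-linear.

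For injectivity of the restrictions to $V\eqdef <x^2,\ldots,x^{2k}>$, I would argue by contradiction, treating $\Phi_{I\setminus J}$ (the harder of the two) and letting the case of $\Phi_I$ follow \emph{a fortiori}. Pick $P\in V$ with $\Phi_{I\setminus J}(P)=0$ and suppose $P\neq 0$; then $\deg P\le 2k$ and $x^2$ divides $P$. Recall from the description of the Bogdanov--Lee scheme that $x_1,\ldots,x_n$ are pairwise distinct elements of $\fq^{\times}$, so after the relabelling carried out above the points $x_1,\ldots,x_a$ are still pairwise distinct and all nonzero. From $\Phi_{I\setminus J}(P)=0$ the polynomial $P$ vanishes at $x_1,\ldots,x_a$; since moreover $x^2\mid P$ and the factors $x^2,\,x-x_1,\,\ldots,\,x-x_a$ are pairwise coprime, the product $x^2\prod_{i=1}^{a}(x-x_i)$, of degree $a+2$, divides $P$. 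But $a=|I|-|J|\ge 2k$ gives $a+2\ge 2k+2>2k\ge\deg P$, which is impossible unless $P=0$ --- a contradiction. The argument for $\Phi_I$ is word for word the same, with $b=|I|\ge |J|+2k\ge 2k$ in place of $a$.

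The one point worth flagging is why one cannot simply invoke ``a polynomial of degree $\le 2k$ with at least $2k$ roots is zero'': the hypotheses only guarantee $a\ge 2k$ (and $b\ge 2k$), so $a$ distinct roots do not by themselves suffice. The missing ingredient is exactly that every evaluation point is nonzero, so that the double root of $P$ at the origin is genuinely extra; this brings the root count to $a+2$, strictly exceeding $\deg P$. Beyond this observation the proof is pure bookkeeping, which is presumably why the authors call the lemma ``obvious''.
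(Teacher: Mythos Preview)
Your proof is correct and follows the same root-counting idea as the paper's (very terse) argument, which simply says ``it is an elementary consequence of polynomial interpolation, since $a=|I|-|J|$ is assumed to be larger than $2k$''. In fact your version is the more careful of the two: the stated hypothesis of the proposition is only $a\geq 2k$, so the point you flag --- that $x^2\mid P$ together with the evaluation points lying in $\fq^{\times}$ forces at least $a+2>2k$ roots --- is precisely what is needed to cover the boundary case $a=2k$, which the paper's one-line proof glosses over.
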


\begin{proof}
  It is sufficient to prove that the restriction of $\Phi_{I\setminus J}$ is
  injective.
  It is an elementary consequence of polynomial interpolation, since 
  $a = |I| - |J|$ is assumed to be be larger than $2k$.
\end{proof}

\begin{lem}
For all $P, Q\in \fq[x]$, we have:
\begin{eqnarray}
\label{haha}   \pI{P}\cwp \pI{Q} &  = &  \pI{PQ}\\
\label{hehe}   \pIJ{P}\cwp \pIJ{Q} &  = &  \pIJ{PQ}\\
\label{hoho}   \pI{P} \cwp \pIJ {Q} & = & \pIJ {PQ}
\end{eqnarray}  
\end{lem}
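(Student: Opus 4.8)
The plan is to verify each of the three identities coordinate by coordinate, using nothing more than the fact that for every fixed $x_i \in \fq$ the evaluation map $P \mapsto P(x_i)$ is a ring homomorphism $\fq[x] \to \fq$, together with the trivial observation that $0$ annihilates every element of $\fq$.

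First I would recall the explicit shapes of the vectors involved: $\pI{P} = (P(x_1), \ldots , P(x_b))$, whereas $\pIJ{P} = (P(x_1), \ldots , P(x_a), 0, \ldots , 0)$ carries $b-a$ trailing zeros. For~\eqref{haha}, the $i$-th coordinate of $\pI{P} \cwp \pI{Q}$ is $P(x_i)Q(x_i) = (PQ)(x_i)$ for every $1 \le i \le b$, which is exactly the $i$-th coordinate of $\pI{PQ}$; hence the two vectors coincide. For~\eqref{hehe}, the first $a$ coordinates behave in the same way and give $(PQ)(x_i)$, while each of the last $b-a$ coordinates equals $0 \cdot 0 = 0$, matching $\pIJ{PQ}$. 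Identity~\eqref{hoho} is the same computation with a mixed pair: on the first $a$ positions one again gets $P(x_i)Q(x_i) = (PQ)(x_i)$, and on the last $b-a$ positions one gets $P(x_i) \cdot 0 = 0$, so the product is $\pIJ{PQ}$.

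There is essentially no obstacle here; the lemma is called ``obvious'' precisely because the componentwise product of vectors corresponds to the pointwise product of the evaluated polynomials, and a zero coordinate remains zero regardless of the other factor. The only point deserving a moment's care is the bookkeeping for~\eqref{hoho}: one should observe that $\pIJ{\cdot}$ always places its zeros on the same index set (the last $b-a$ positions), so that in the mixed product these positions genuinely vanish, rather than being killed on only one side; once that remark is made, the verification is immediate. This is exactly the ring-homomorphism compatibility needed afterwards to translate products of the codes $\CC(i,j)$ into products of the corresponding polynomial spaces.
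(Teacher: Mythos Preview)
Your proof is correct and simply spells out the coordinatewise verification that the paper omits, having labeled the lemma ``obvious.'' There is nothing to compare: the paper gives no proof at all, and your argument is exactly the routine check one expects.
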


\noindent Clearly, we have
\begin{equation}
\CC_I = \pI{<x,\ldots, x^{^\ell}>}\ \oplus \ \pIJ{<x^{\ell +1},\ldots, x^k>}.
\end{equation}
Using (\ref{haha}), (\ref{hehe}) and (\ref{hoho}), we get
\begin{align*}
\CC_I^2 & =     \pI{<x, \ldots, x^{\ell}>}^2 \ + \
                  \pIJ{<x^{\ell +1}, \ldots , x^k>}^2\\
   & \qquad \qquad \qquad \qquad \qquad \qquad \qquad 
       \ + \ \pI{<x, \ldots, x^{\ell}>}\cwp \pIJ{<x^{\ell +1}, \ldots, x^k>} \\
        & =    \pI{<x^2, \ldots , x^{2\ell}>} \ + \
        \pIJ{<x^{2\ell+2}, \ldots, x^{2k}>}
        \ +\ \pIJ{<x^{\ell +2 }, \ldots, x^{k+ \ell}>}\\
        & =    \pI{<x^2, \ldots , x^{2\ell}>} \ + \
        \pIJ{<x^{2\ell+2}, \ldots, x^{2k}>
        \ +\ <x^{\ell +2 }, \ldots, x^{k+ \ell}>}
\end{align*}
Since, by assumption, $\ell <k$, we have
$$
<x^{\ell +2 }, \ldots, x^{k+ \ell}>
        \ +\ <x^{2\ell+2}, \ldots, x^{2k}>
 \ =\ <x^{\ell+2}, \ldots, x^{2k}>
$$
Therefore,
\begin{equation}
\label{eq:dimCCIsquare}
\CC_I^2 = \pI{<x^2, \ldots, x^{2\ell}>} \ +\ \pIJ{<x^{\ell+2}, \ldots, x^{2k}>}.
\end{equation}
Lemma~\ref{lem:injection} entails
\begin{equation}
  \label{eq:dims}
  \dim \pI{<x^2, \ldots , x^{2\ell}>} = 2\ell -1,\ \ \textrm{and}\ \ 
  \dim \pIJ{<x^{\ell+2}, \ldots , x^{2k}>} = 2k-\ell -1.
\end{equation}
To conclude the proof, we need to compute the dimension of the intersection of these spaces.
For this purpose, set
$$
R(x)\eqdef \prod_{j=a+1}^b (x-x_j).
$$
An element of $\pI{<x^2, \ldots , x^{2\ell}>} \cap
\pIJ{<x^{\ell+2},\ldots , x^{2k}>}$ is
an element of $\pI{<x^2 , \ldots, x^{2\ell}>}$ which vanishes on the
$|J| = b-a$ last
positions: it is an element of 
$\pI{<x^2 R(x), \ldots , x^{2\ell-|J|}R(x)>}$.
Thus,
\begin{align*}
\pI{<x^2, \ldots, x^{2 \ell}>} \cap &\ \pIJ{<x^{\ell+2}, \ldots , x^{2k}>} \\ &= 
\pI{<x^2 R,\ldots, x^{2\ell - |J|}R>} \cap  \pIJ{<x^{\ell+2}, \ldots, x^{2k}>}\\
& = 
\pIJ{<x^2 R,\ldots, x^{2\ell - |J|}R>} \cap  \pIJ{<x^{\ell+2}, \ldots, x^{2k}>}\\
 & =  \pIJ{<x^2R, \ldots , x^{2\ell - |J|}R> \cap  <x^{\ell+2}, \ldots, x^{2k}>}.
\end{align*}
The last equality is also a consequence of Lemma~\ref{lem:injection} since the
direct image of an intersection by an injective map is the intersection of
the direct images.

Since all the $x_i$'s are nonzero, the polynomials $x^{\ell+2}$ and $R$ are prime
to each other, this yields
\begin{align*}
<x^2R, \ldots , x^{2\ell - |J|}R> \cap  <x^{\ell+2}, \ldots, x^{2k}>
& = <x^{\ell +2}R, \ldots, x^{2\ell - |J|}R >.
\end{align*}
Therefore,
\begin{equation}
  \label{eq:intersection}
\pI{<x^2, \ldots, x^{2 \ell}>} \cap \pIJ{<x^{\ell+2}, \ldots , x^{2k}>} 
= \pIJ{< x^{\ell + 2 } R(x), \ldots, x^{2\ell - |J|}R(x)>}
\end{equation}
and this last space has dimension $\ell - |J|- 1$.
Finally, combining~(\ref{eq:dimCCIsquare}), (\ref{eq:dims})
and~(\ref{eq:intersection}), we get
$$
\dim \CC_I^2 = (2k - \ell -1)  + (2\ell -1) -(\ell - |J| -1) = 2k + |J| - 1.
$$

\newpage

\section{Proof of Lemma \ref{lem:structure}}
    \label{appendixBaldi}
    
Recall that $\Rm$ has rank $1$, then so does $\Rm \Pim^{-1}$ and there exist
$\av$ and $\bv$ in $\fq^n$ such that $ \Rm \Pim^{-1} = \bv^T \av$.
Set
$$\Pm \eqdef \Im + \Rm \Pim^{-1} = \Im + \bv^T \av.$$
 
\noindent We first need the following lemmas
\begin{lem}
  \label{lem:PQinv}
  The matrix $\Qm$ is invertible if and only if $\Pm$ is.
\end{lem}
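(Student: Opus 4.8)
The matrix $\Qm$ is invertible if and only if $\Pm$ is, where $\Pm \eqdef \Im + \Rm\Pim^{-1} = \Im + \bv^T\av$ and $\Qm = \Pim + \Rm$.

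\textbf{Plan.} The proof is immediate once one spots the right factorization. The plan is to write $\Qm$ as a product of $\Pm$ with the permutation matrix $\Pim$, and then use that $\Pim$, being a permutation matrix, is invertible.

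\textbf{Key step.} I would start from the definitions $\Qm = \Pim + \Rm$ and $\Pm = \Im + \Rm\Pim^{-1} = \Im + \bv^T\av$, and simply compute
$$
\Pm\,\Pim = \left(\Im + \Rm\Pim^{-1}\right)\Pim = \Pim + \Rm\Pim^{-1}\Pim = \Pim + \Rm = \Qm.
$$
Thus $\Qm = \Pm\,\Pim$. Since $\Pim$ is a permutation matrix it is invertible, so $\Qm$ is a product of $\Pm$ with an invertible matrix. Hence $\Qm$ is invertible if and only if $\Pm$ is, and moreover $\Pm^{-1} = \Pim\,\Qm^{-1}$ when either inverse exists.

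\textbf{Expected obstacle.} There is essentially no obstacle here; the only thing to be careful about is keeping track of the order of multiplication, since $\Rm\Pim^{-1}\Pim = \Rm$ but $\Pim^{-1}\Rm\Pim$ would be something different. The statement is purely a bookkeeping consequence of the factorization $\Qm = \Pm\Pim$, which is the same identity that motivates the definition of $\Pm$ in the first place. (The genuinely substantive facts about $\Pm$ — that $\scp{\av}{\bv}\neq -1$, so that $\lambdav$ in Lemma~\ref{lem:structure} is well defined, and an explicit formula for $\Pm^{-1}$ via the Sherman–Morrison identity applied to the rank-one update $\Im + \bv^T\av$ — are what the subsequent lemmas in the appendix will address.)
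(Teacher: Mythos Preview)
Your proof is correct and essentially identical to the paper's: both establish the factorization $\Qm = \Pm\Pim$ directly from the definitions and then use that $\Pim$ is invertible. The paper's version is just slightly terser, writing the chain $\Qm = \Pim + \Rm = (\Im + \Rm\Pim^{-1})\Pim = \Pm\Pim$ in one line.
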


\begin{proof}
We have $\Qm = \Pim + \Rm = (\Im + \Rm \Pim^{-1})\Pim = \Pm \Pim$,
which yields the proof.
\end{proof}

\begin{lem}
\label{lem:inverse}
The matrix $\Pm$ is invertible if and only if $\scp{\av}{\bv} \neq -1$.
In addition, if it is invertible, then
$$
\Pm^{-1} = \Im -\frac{1}{1+\scp{\av}{\bv}} \bv^T \av.
$$
\end{lem}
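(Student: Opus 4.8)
The statement is a standard rank-one update formula (a special case of the Sherman--Morrison identity), so the plan is to verify it directly by multiplication rather than invoking any outside machinery. First I would treat the ``only if'' direction: suppose $\scp{\av}{\bv} = -1$ and exhibit a nonzero vector in the kernel of $\Pm = \Im + \bv^T\av$. The natural candidate is $\bv^T$ itself (as a column vector), since $\Pm \bv^T = \bv^T + \bv^T(\scp{\av}{\bv}) = \bv^T(1 + \scp{\av}{\bv}) = 0$ when $\scp{\av}{\bv} = -1$; one only needs $\bv \neq 0$, which holds because otherwise $\Rm\Pim^{-1} = \bv^T\av = 0$ and $\scp{\av}{\bv} = 0 \neq -1$. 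Hence $\Pm$ is singular in this case, which gives the contrapositive of one implication.

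For the ``if'' direction together with the explicit formula, I would simply set $\Mm \eqdef \Im - \frac{1}{1+\scp{\av}{\bv}}\bv^T\av$ (well-defined precisely when $\scp{\av}{\bv}\neq -1$) and compute $\Pm\Mm$. Expanding,
$$
\Pm \Mm = \left(\Im + \bv^T\av\right)\left(\Im - \tfrac{1}{1+\scp{\av}{\bv}}\bv^T\av\right)
= \Im + \bv^T\av - \tfrac{1}{1+\scp{\av}{\bv}}\bv^T\av - \tfrac{1}{1+\scp{\av}{\bv}}\bv^T\av\bv^T\av.
$$
The key observation is that $\av\bv^T = \scp{\av}{\bv}$ is a scalar, so $\bv^T\av\bv^T\av = \scp{\av}{\bv}\,\bv^T\av$, and the last two terms combine to $-\frac{1+\scp{\av}{\bv}}{1+\scp{\av}{\bv}}\bv^T\av = -\bv^T\av$, leaving $\Pm\Mm = \Im$. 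The computation of $\Mm\Pm = \Im$ is identical by symmetry. Thus $\Pm$ is invertible with inverse $\Mm$, which is the claimed formula.

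There is no real obstacle here; the only point requiring a moment's care is the scalar identity $\av\bv^T = \scp{\av}{\bv}$ (consistent with the paper's convention that $\av,\bv$ are row vectors, so $\bv^T\av$ is an $n\times n$ matrix and $\av\bv^T$ is a $1\times 1$ scalar), and making sure the nonzeroness of $\bv$ is justified in the singular case. Everything else is routine bookkeeping with the rank-one matrix $\bv^T\av$.
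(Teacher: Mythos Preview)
Your proof is correct and, for the ``if'' direction (the explicit inverse formula), is identical to the paper's: both simply multiply $\Pm$ by the candidate inverse and use $\av\bv^T = \scp{\av}{\bv}$ to collapse the rank-one terms. The only difference is in the ``only if'' direction: you exhibit $\bv^T$ as a nonzero kernel vector when $\scp{\av}{\bv}=-1$, whereas the paper computes $\Pm^2 = \Im + (2+\scp{\av}{\bv})\bv^T\av = \Pm$ and argues that a non-identity idempotent is singular. Both arguments are one-line and equally elementary; yours is arguably the more direct of the two, since it names the kernel explicitly rather than passing through the projection property.
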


\begin{proof}
First, assume that $\scp{\av}{\bv}\neq -1$. Then,
\begin{eqnarray*}
\Pm \left( \Im -\frac{1}{1+\scp{\av}{\bv}} \bv^T \av \right) &= & \left( \Im + \bv^T \av \right) \left( \Im -\frac{1}{1+\scp{\av}{\bv}} \bv^T \av \right)\\
& = & \Im + \left( 1 - \frac{1}{1+\scp{\av}{\bv} }\right)\bv^T \av  -\frac{1}{1+\scp{\av}{\bv}} \bv^T \av \bv^T \av \\
& = &\Im + \frac{\scp{\av}{\bv}}{1+\scp{\av}{\bv}}\bv^T \av - \frac{\scp{\av}{\bv}}{1+\scp{\av}{\bv}}\bv^T \av\\
& = & \Im.
\end{eqnarray*}

To conclude the ``only if'' part of the proof, there remains to prove that $\Pm$ is non invertible for $\scp{\av}{\bv} = -1$.
Assume $\scp{\av}{\bv} = -1$, then
$$
\Pm^2 = \Im +2\bv^T \av + \bv^T \av \bv^T \av = \Im + (2 +\scp{\av}{\bv}) \bv^T \av = \Pm.
$$
Thus, in this situation, $\Pm$ is a projection distinct from $\Im$ and hence
is non invertible.
\end{proof}

\begin{proof}[Proof of Lemma~\ref{lem:structure}]
 Let $\cv$ be an element of $\Cpub$.
 Since
$$\Csec =  \Cpub \Qm =  \Cpub (\Pim + \Rm)= \Cpub (\Im + \Rm\Pim^{-1})\Pim.$$
We obtain
\begin{equation}
  \label{eq:Cpub=CsecPim1}
   \CC =\Csec \Pim^{-1}=  \Cpub \Pm \quad {\rm where}\quad  \Pm \eqdef \Im + \Rm \Pim^{-1}.
\end{equation}
Therefore
 $$
 \Cpub = (\Csec \Pim^{-1}) \Pm^{-1} = \CC \Pm^{-1}.
 $$
 From this, we obtain that there exists $\pv$ in $\CC$ such that
 $\cv = \pv \Pm^{-1}$. Thus, from Lemma~\ref{lem:inverse} 
we know that 
$\Pm^{-1} = \Im -\frac{1}{1+\scp{\av}{\bv}} \bv^T \av = \Im  + \lambdav^T \av$, which enables to write:
$$
\cv =  \pv \left(   \Im  + \lambdav^T \av \right)  = \pv+(\scp{\lambdav}{\pv})\av.
$$
\end{proof}

\begin{cor}
  \label{cor:inverse}
  Given $\uv, \vv \in \fq^n$ the map $\pv \mapsto \pv + (\scp{\uv}{\pv})\vv$
  is an automorphism of $\fq^n$ if and only if $\scp{\uv}{\vv}\neq -1$.
\end{cor}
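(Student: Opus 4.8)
The plan is to recognize the map $\varphi \colon \pv \mapsto \pv + (\scp{\uv}{\pv})\vv$ as right-multiplication by a matrix of exactly the form treated in Lemma~\ref{lem:inverse}, and then invoke that lemma. First I would note that $\varphi$ is $\fq$-linear, since $\pv \mapsto \scp{\uv}{\pv}$ is linear. Next, writing $\pv$ as a row vector so that $\scp{\uv}{\pv} = \pv \uv^T$, we have
$$
\varphi(\pv) = \pv + (\pv \uv^T)\vv = \pv\left(\Im + \uv^T \vv\right),
$$
so $\varphi$ is the endomorphism of $\fq^n$ given by $\pv \mapsto \pv \Pm'$ with $\Pm' \eqdef \Im + \uv^T\vv$.

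Then I would use the elementary fact that such an endomorphism of the finite-dimensional space $\fq^n$ is an automorphism if and only if the matrix $\Pm'$ is invertible. At this point the statement follows from Lemma~\ref{lem:inverse} applied with the substitution $\av \leftarrow \vv$ and $\bv \leftarrow \uv$: that lemma asserts that $\Im + \bv^T\av$ is invertible precisely when $\scp{\av}{\bv} \neq -1$, and since the inner product is symmetric, $\scp{\vv}{\uv} = \scp{\uv}{\vv}$. Hence $\Pm'$ is invertible if and only if $\scp{\uv}{\vv} \neq -1$, which gives the corollary.

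There is essentially no hard step here; the only point requiring a moment of care is the bookkeeping of conventions, namely that codewords are row vectors acted on by matrices on the right, so that $(\scp{\uv}{\pv})\vv$ really does equal $\pv(\uv^T\vv)$ and the relevant matrix is $\Im + \uv^T\vv$ rather than $\Im + \vv^T\uv$. Once this is pinned down, everything reduces to the already-proven Lemma~\ref{lem:inverse}. (If one prefers to avoid even referencing Lemma~\ref{lem:inverse}, one can instead give the inverse explicitly: when $\scp{\uv}{\vv}\neq -1$ the inverse of $\varphi$ is $\pv \mapsto \pv - \frac{1}{1+\scp{\uv}{\vv}}(\scp{\uv}{\pv})\vv$, checked by a direct computation using $\uv^T\vv\uv^T = \scp{\uv}{\vv}\,\uv^T$; and when $\scp{\uv}{\vv} = -1$, the computation $\varphi\circ\varphi = \varphi$ shows $\varphi$ is a nontrivial idempotent, hence not invertible.)
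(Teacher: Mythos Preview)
Your proposal is correct and matches the paper's intent: the corollary is stated immediately after Lemma~\ref{lem:inverse} with no separate proof, precisely because it follows by recognizing $\pv \mapsto \pv + (\scp{\uv}{\pv})\vv$ as right-multiplication by $\Im + \uv^T\vv$ and applying that lemma. Your bookkeeping of the row-vector convention and the substitution $\av \leftarrow \vv$, $\bv \leftarrow \uv$ is exactly right.
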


 \newpage
\section{Proof of Proposition~\ref{prop:three}} \label{proof:dim3k_3}

This follows immediately from the fact that we can express
 $\zv_i$ in terms of the $\gv_j$'s, say
 $$
 \zv_i = \sum_{1 \leqslant  j \leqslant  k} a_{ij} \gv_j.
 $$
 We observe now that there exist three  relations between the 
$\zv_i \cwp \gv_j$'s:
\begin{eqnarray}
\sum_{1 \leqslant  j \leqslant  k} a_{2j} \zv_1 \cwp \gv_j - \sum_{1 \leqslant  j \leqslant  k} a_{1j} \zv_2 \cwp \gv_j 
= \zv_1 \cwp \zv_2 - \zv_2 \cwp \zv_1 &= & 0 \label{eq:crossproduct}\\
\sum_{1 \leqslant  j \leqslant  k} a_{3j} \zv_1 \cwp \gv_j - \sum_{1 \leqslant  j \leqslant  k} a_{1j} \zv_3 \cwp \gv_j 
= \zv_1 \cwp \zv_3 - \zv_3 \cwp \zv_1 &= & 0 \label{eq:crossproduct2}\\
\sum_{1 \leqslant  j \leqslant  k} a_{3j} \zv_2 \cwp \gv_j - \sum_{1 \leqslant  j \leqslant  k} a_{2j} \zv_3 \cwp \gv_j 
= \zv_2 \cwp \zv_3 - \zv_3 \cwp \zv_2 &= & 0 \label{eq:crossproduct3}
\end{eqnarray}
It remains to prove that the three obtained identities relating the $\zv_i \cwp \gv_j$'s
are independent under some conditions on the $\zv_i$'s. Actually, these relations are
independent if and only if the $\zv_i$'s
generate a space of dimension larger than or equal to $2$.
Indeed, sort the $\zv_1 \cwp \gv_j$'s as $\zv_1 \cwp \gv_1, \ldots ,
\zv_1 \cwp \gv_k, \zv_2 \cwp \gv_1, \ldots , \zv_2 \cwp \gv_k,
\zv_3 \cwp \gv_1, \ldots ,  \zv_3 \cwp \gv_k$.
Then the system defined by Equations~(\ref{eq:crossproduct}) to (\ref{eq:crossproduct3})
is defined by the $ 3 \times 3k$ matrix
$$
A:=\begin{pmatrix}
  a_{21} & \cdots & a_{2k} & -a_{11} & \cdots & -a_{1k} & 0 & \cdots & 0 \\
  a_{31} & \cdots & a_{3k} & 0 & \cdots & 0  & -a_{11} & \cdots & -a_{1k} \\
  0 & \cdots & 0  & -a_{31} & \cdots & -a_{3k} & a_{21} & \cdots & a_{2k}  
\end{pmatrix}.
$$
Then, $A$ has rank strictly less than $3$ if there exists a vector $\word{u} = (u_1, u_2, u_3)$
such that $\word{u}A = 0$ which is equivalent to the system
$$
\left\{
  \begin{array}{rcc}
    u_1 \zv_2 + u_2 \zv_3 & = & 0\\
    -u_1 \zv_1 - u_3 \zv_3 & = & 0\\
    - u_2 \zv_1 + u_3 \zv_2 & = & 0
  \end{array}
\right. 
$$
and such a system has a nonzero solution $\word{u}=(u_1, u_2, u_3)$
if and only if the $\zv_i$'s are pairwise collinear \textit{i.e.}
generate a subspace of dimension lower than or equal to $1$.
\end{document}